\newcommand{\E}{\mathbb{E}} 
\newcommand{\BIB}{\mathrm{BIB}}
\newcommand{\FE}{\mathrm{FE}}
\newcommand{\Var}{\mathrm{Var}}
\newcommand{\Ber}{\mathrm{Ber}}
\newcommand{\Bin}{\mathrm{Bin}}
\newcommand{\HHD}{\mathrm{HHD}}
\newcommand{\Noise}{\mathrm{Noise}}
\newcommand{\pcol}{p_\mathrm{col}}
\newcommand{\softO}{\tilde{O}}
\newcommand{\softOmega}{\tilde{\Omega}}
\newcommand{\MSE}{\mathrm{MSE}}
\newcommand{\AAA}{\mathcal{A}}
\newcommand{\CC}{\mathcal{C}}
\newcommand{\HH}{\mathcal{H}}
\newcommand{\MM}{\mathcal{M}}
\newcommand{\OO}{\mathcal{O}}
\newcommand{\PP}{\mathcal{P}}
\newcommand{\RR}{\mathcal{R}}
\newcommand{\TT}{\mathcal{T}}
\newcommand{\XX}{\mathcal{X}}
\newcommand{\YY}{\mathcal{Y}}
\begin{document}
\title{Frequency Estimation in the Shuffle Model with~Almost~a~Single~Message}

\author{Qiyao Luo}
\email{qluoak@cse.ust.hk}
\affiliation{%
  \institution{Hong Kong University of \\ Science and Technology}
  \country{Hong Kong, China}
}

\author{Yilei Wang}
\email{fengmi.wyl@alibaba-inc.com}
\affiliation{%
  \institution{Alibaba Group}
  \country{China}
}

\author{Ke Yi}
\email{yike@cse.ust.hk}
\affiliation{%
  \institution{Hong Kong University of \\ Science and Technology}
  \country{Hong Kong, China}
}

\renewcommand{\shorttitle}{Frequency Estimation in the Shuffle Model with Almost a Single Message}

\begin{abstract}
We present a protocol in the shuffle model of differential privacy (DP) for the \textit{frequency estimation} problem that achieves error $\omega(1)\cdot O(\log n)$, almost matching the central-DP accuracy, with $1+o(1)$ messages per user.  This exhibits a sharp transition phenomenon, as there is a lower bound of $\Omega(n^{1/4})$ if each user is allowed to send only one message.  Previously, such a result is only known when the domain size $B$ is $o(n)$.  For a large domain, we also need an efficient method to identify the \textit{heavy hitters} (i.e., elements that are frequent enough).  For this purpose, we design a shuffle-DP protocol that uses $o(1)$ messages per user and can identify all heavy hitters in time polylogarithmic in $B$.  Finally, by combining our frequency estimation and the heavy hitter detection protocols, we show how to solve the $B$-dimensional \textit{1-sparse vector summation} problem in the high-dimensional setting $B=\Omega(n)$, achieving the optimal central-DP MSE $\softO(n)$  with $1+o(1)$ messages per user. In addition to error and message number, our protocols improve in terms of message size and running time as well.  They are also very easy to implement.  The experimental results demonstrate order-of-magnitude improvement over prior work. 
\end{abstract}

\begin{CCSXML}
<ccs2012>
   <concept>
       <concept_id>10002978.10002991.10002995</concept_id>
       <concept_desc>Security and privacy~Privacy-preserving protocols</concept_desc>
       <concept_significance>500</concept_significance>
       </concept>
 </ccs2012>
\end{CCSXML}
\ccsdesc[500]{Security and privacy~Privacy-preserving protocols}

\keywords{Differential privacy, frequency estimation, heavy hitter, sparse vector summation}

\maketitle

\section{Introduction}

\subsection{Central-DP, local-DP, and shuffle-DP}
A randomized mechanism $\MM:\XX^n \rightarrow \YY$ is \textit{$(\varepsilon, \delta)$-differentially private (DP)} for some privacy parameters $\varepsilon > 0$ and $0 \leq \delta < n^{-\Omega(1)}$, if for any two neighboring datasets $\bm{D} \sim \bm{D}'$ (i.e., $\bm{D}$ and $\bm{D}'$ differ by one element), and any set of outputs $Y \subseteq \YY$,
\begin{equation}
    \label{eq:DP}
\Pr [\MM(\bm{D}) \in Y] \leq e^{\varepsilon} \cdot \Pr [\MM(\bm{D}') \in Y] + \delta.
\end{equation}
Depending on how $\MM(\bm{D})$ is defined, we arrive at different DP models. In \textit{central-DP}, there is a trusted curator who has direct access to the entire dataset $\bm{D}=(x_1,\dots,x_n)$ and $\MM(\bm{D})$ is the privatized estimate of the desired function. In \textit{local-DP}, each user $i$ locally privatizes their own datum $x_i$ using a randomizer $\RR$, and sends $\RR(x_i)$ to an untrusted analyzer $\AAA$.  The DP requirement \eqref{eq:DP} shall hold on these messages, i.e., $\MM(\bm{D}) = (\RR(x_1),\dots, \RR(x_n))$.  Local-DP provides an alternative approach to \textit{secure multi-party computation (MPC)} with the following advantages:
\begin{itemize}
    \item Simplicity: A local-DP protocol, by definition, uses only one round of one-way communications, as opposed to most MPC protocols that require multiple rounds of two-way interactions.  
    \item Efficiency: Local-DP protocols are usually much more efficient than their MPC counterparts in terms of both communication and computation.  Its communication cost is often $\tilde{O}(n)$, as opposed to at least $\Omega(n^2)$ even for the most basic MPC protocol (assuming a constant fraction of colluding parties), such as bit AND \cite{10.5555/3081738.3081753}.  Furthermore, most local-DP protocols use simple arithmetic operations without any expensive cryptography. 
\end{itemize}

These advantages make local-DP a promising direction for large-scale aggregation problems over private data, which has already been adopted by Apple \cite{apple:overview}, Google \cite{google:COVID}, and Microsoft \cite{microsoft:smartnosise}. But it also has clear weaknesses:
\begin{enumerate}
    \item Weaker privacy guarantee: An MPC transcript reveals information about the private input with negligible probability, but in local-DP, $\varepsilon$ is usually set to a constant\footnote{Thus, in the introduction, we state all results for a constant $\varepsilon$, and assume $\Theta(\log{\frac{1}{\delta}})= \Theta(\log{\frac{1}{\beta}}) = \Theta(\log B) =\Theta(\log n)$ to simplify the bounds.  The $\tilde{O}$ notation further suppresses these polylogarithmic factors.} \cite{apple:overview,google:COVID}.
    \item Low accuracy: While an MPC protocol can compute a given functionality with 100\% accuracy, local-DP protocols incur errors at least $\Omega(\sqrt{n})$ even for the most basic problems (e.g., bit counting) and this is inherent \cite{shi2017distributed, chan2012optimal, chen21distinct}.
\end{enumerate}

The first weakness can often be mitigated in the large-scale setting, as concerned users may opt out, which would not affect the aggregation accuracy too much as long as enough users participate.  In addition, technically speaking, the privacy guarantees of MPC and DP are incomparable: An MPC protocol satisfies \eqref{eq:DP} with $\varepsilon=0$ over any two $\bm{D}, \bm{D}'$ having the same function output but not necessarily neighbors.  This offers stronger protection on the transcript, but it does not prevent the output from revealing sensitive information.  Thus, an MPC protocol is often combined with differential privacy to provide all-round protection, i.e., a central-DP mechanism for the function is implemented inside the MPC \cite{pettai2015combining, kairouz2015secure, he2017composing}.  
This makes the strong privacy guarantee of MPC a bit ``excessive'', since the overall privacy guarantee of the whole system is decided by the weakest link, which is differential privacy.

However, the second weakness is still a major concern for local-DP.  In fact, it is due to this reason that the above MPC $+$ central-DP approach is still popular, since a central-DP mechanism has error $\tilde{O}(1)$ for many problems.

Instead of MPC $+$ central-DP, which has good accuracy but high costs, a different approach has been suggested, by assuming that the $\mathcal{R}(x_i)$'s are sent to the analyzer anonymously \cite{Ishai06}.  Combined with differential privacy, this results in the \textit{shuffle-DP} model \cite{badih2021on, borja2019the, ghazi21aggregation, balle20:sum, Bittau2017prochlo, cheu2019distributed, chen21distinct}.  More precisely, shuffle-DP applies the requirement \eqref{eq:DP} on the multiset $\MM(\bm{D})=\{\RR(x_1), \dots, \RR(x_n)\}$, or equivalently, the messages are given to the analyzer after a random shuffle.  Shuffle-DP keeps the simplicity and efficiency of local-DP, modulo the cost of the shuffler, while possibly offering significant improvement in terms of accuracy.  For the bit counting problem, a simple randomized response protocol, in which each user sends a random bit with probability $\tilde{O}(1/n)$, otherwise the true input $x_i$, can already achieve $\tilde{O}(1)$ error in shuffle-DP, matching the error in central-DP up to polylogarithmic factors.

\paragraph{Single-message shuffle-DP}
In the randomized response protocol above, each user only sends one message (actually, just one bit). Moving beyond bit counting, people realized that the single-message shuffle-DP model has its limitation.  Generalizing bit counting to the \textit{real summation} problem, i.e., each $x_i \in  [0,1]$ and we wish to estimate $\sum_i x_i$, we encounter a lower bound of $\Omega(n^{1/6})$ \cite{borja2019the}. For the frequency estimation problem, which is the focus of this paper, there is a lower bound of  $\Omega(n^{1/4})$ \cite{badih2021on}.  On the other hand, central-DP can still achieve error $\tilde{O}(1)$ for both problems.

\paragraph{Multi-message shuffle-DP}
In view of the lower bounds above, it is natural to study shuffle-DP protocols where each user sends multiple messages.  For real summation, the protocol in \cite{Ishai06} achieves $O(1)$ error by sending $O(\log n)$ messages per user.  This was later improved to $O(1)$
messages \cite{balle20:sum}.  Recently, \citet{ghazi21aggregation} have designed a simple and elegant protocol that further reduces the (expected) number of messages per user to $1+o(1)$, and reduces the constant in the $O(1)$ error bound to almost match the error achievable in central-DP.  The reduction in the message number is important both theoretically and practically: Theoretically, it exhibits a sharp transition phenomenon that $o(1)$ extra messages can bring a polynomial improvement in the error; practically, reducing the message number is important since each message has to be anonymized by the shuffler, which could be the dominating cost in a shuffle-DP system, depending on the implementation of the shuffler and its trust assumptions. 

\subsection{Frequency Estimation}
In the frequency estimation problem, each of the $n$ users holds an element from a domain $[B]\coloneqq\{0,\dots,B-1\}$ and the goal is to estimate the number of users holding element $x$ for any $x \in [B]$.  Let $\bm{D}=(x_1, \dots, x_n) \in [B]^n$ be the input, and let $g_x = \sum_{i=1}^n \mathbb{I}[x_i = x]$ be the frequency of $x$ in $\bm{D}$.  A frequency estimation protocol $P:[B]^n \rightarrow \mathbb{N}^B$ returns an estimate of $g_x$ for each $x\in [B]$.  We are interested in the maximum estimation error, i.e., for a failure probability $\beta$, we say that $P$ has error $\alpha$ if
\[\Pr \left[ \max_{x \in [B]} \left|P(\bm{D})_x - g_x \right| > \alpha \right] < \beta. \]

When $B=2$, the problem degenerates into bit counting.  But for the frequency estimation problem, we are more interested in the large-domain case $B\gg n$. For example, when estimating the popularity of hashtags and  websites without any \textit{a priori} restrictions on the possible hashtags/websites, the domain would be all strings (up to a certain length) and all IP addresses. For large $B$, $P(\bm{D})$ shall be returned implicitly, i.e., it is a data structure from which $P(\bm{D})_x$ can be extracted for any given query $x$. 

\citet{badih2021on} prove a lower bound of $\Omega(n^{1/4})$ if each user is only allowed a single message; they also present a protocol that uses $O(\log n)$ messages to achieve $O(\log n)$ error, matching the optimal error in central-DP. In this paper, we present a new shuffle-DP frequency estimation protocol that achieves (i) $O(\log n)$ error with $O(1)$ messages; or (ii) $\omega(1) \cdot O(\log n)$ error with $1+o(1)$ messages, where $\omega(1)$ denotes any super-constant function. Thus, similar to the real summation problem, we see a sharp transition phenomenon for the frequency estimation problem as well.

In addition to error and message complexity, our protocol also  reduces each message size from $O(\log^2 n)$ bits \cite{badih2021on} to $O(\log n)$ bits, and reduce the query time from $O(n\log^3 n)$ \cite{badih2021on} to $O(n)$.  It is also easy to implement, requires only private randomness, and defends against any constant fraction of malicious users (known as \textit{robust shuffle-DP}).
	
\subsection{Heavy Hitter Detection}
For a large domain, estimating the frequency of each element one by one is impractical.  Instead, it should be equipped with a \textit{heavy hitter detection} technique to identify the elements that are frequent enough.  For a threshold $0<\phi < 1$, the set of \textit{heavy hitters} are $\{ x \mid g_x \ge \phi n, x \in [B] \}$.  Note that there are at most $1/\phi$ heavy hitters, thus the goal of heavy hitter detection is to find all of them (their identifiers) in time that depends on $1/\phi$.  Afterwards, one can query a frequency estimation data structure to obtain their frequencies and remove the false positives, subject to an error of $\tilde{O}(1)$.
	
The heavy hitter detection shuffle-DP protocol in \cite{badih2021on} works for any $\phi = \Omega(\log^2 n/ n)$, sends $O(\log^4 n)$ messages per user, and the detection time is $O(n \log^6 n/ \phi)$.  In this paper, we present an improved technique to detect all heavy hitters for any $\phi = \Omega(\log^2 n / n)$ using $O(\log^2 n / \phi n)$ messages and the detection time is $O(\log^2 n / \phi^2)$, both of which are better than \cite{badih2021on} for all $\phi = \Omega(\log^2 n / n)$, and the improvement is more significant for larger $\phi$. In particular, for $\phi=\omega(\log^2 n /n)$, the message number is $o(1)$.  Combined with our $1+o(1)$ frequency estimation protocol, we obtain a $1+o(1)$ protocol to estimate the frequencies of all elements with error $\tilde{O}(1)$ in time $O(n^2)$ for any $B = n^{O(1)}$, where we simply set the frequencies of the light hitters to $0$.

\subsection{1-Sparse Vector Summation}
Finally, we consider the \textit{1-sparse vector summation} problem \cite{ghazi21aggregation}. The $i$-th user holds an element $x_i\in[B]$ together with a weight $w_i \in [0, 1]$, and the goal is to estimate $g_x = \sum_{i=1}^n w_i\cdot \mathbb{I}[x_i = x]$ for all $x \in [B]$. In other words, each user holds a 1-sparse $B$-dimensional vector $\bm{v_i} \in [0, 1]^B$, and the goal is to estimate $\bm{v} = \sum_{i=1}^n \bm{v_i}$.  Note that this problem degenerates into real summation by setting $B=1$, or frequency estimation by setting $w_i =1$ for all $i$.  However, unlike the frequency estimation problem where we are concerned with the maximum error over all elements (the coordinates of $\bm{v}$ in this case), for the vector summation problem, the error metric is often the MSE, i.e., $\E[(\bm{v}-\hat{\bm{v}})^2]$ where $\hat{\bm{v}}$ is an estimated $\bm{v}$.

By directly applying their $1+o(1)$ real summation protocol on each coordinate, \citet{ghazi21aggregation} show how to achieve an MSE of $\tilde{O}(\min\{B,n\})$, matching the optimal central-DP error bound \cite{bun2016simultaneous} with $1+\tilde{O}(\sqrt{B/n} + B/n)$ messages per user\footnote{As stated, their protocol only achieves MSE $\tilde{O}(B)$ and the message number is $1+\softO(B/\sqrt{n})$.  In Section \ref{sec:review}, we show how their protocol can be modified to achieve these improved bounds. }.  This is a $1+o(1)$-message protocol only in the low-dimensional setting $B=o(n)$.
In this paper, we show that in high dimensions $B=\Omega(n)$, this problem can be solved by simply combining our frequency estimation and heavy hitter detection protocols, which can achieve $\tilde{O}(n)$ MSE with $1+o(1)$ messages.  Thus, together with \cite{ghazi21aggregation}, we have shown that the optimal central-DP MSE (up to polylogarithmic factors) can be achieved with $1+o(1)$ messages in shuffle-DP.
	

\section{Related Work}

The idea of anonymizing/shuffling the messages before handing them to an analyzer goes back to at least 1981 \cite{chaum1982untraceable}.  Combining this notion with differential privacy results in the shuffle-DP model, which has attracted much attention in recent years.  In addition to the problems already covered earlier, \citet{chen21distinct} study the distinct count problem in shuffle-DP, which also exhibits a separation between the single-message and multi-message setting.  They prove an error lower bound of $\Omega(n)$ (hence, hopeless) for the former, and an upper bound of $\softO(\sqrt{n})$ for the latter.  The standard shuffle-DP model, like local-DP, only allows a single round of communication.  Recently, \citet{amos2020on} extend this model and investigate what can be achieved with two rounds.  \citet{huang2021instance} design a 3-round mean estimation protocol that achieves instance optimality.  In this paper, we use the standard 1-round model of shuffle-DP.

The protocol in \cite{badih2021on} is the state of the art for frequency estimation and heavy hitter detection for a large $B$, i.e., the dependency on $B$ is logarithmic.  Prior to that, there are a number of protocols with a linear dependency on $B$. \citet{balcer2019separating} provide a protocol that sends $O(B)$ messages per user while achieving $O(\log n)$ error, which keeps independent of the domain size even when $\log B = \Omega(\log n)$.
\citet{albert2021privatehistogram} design a protocol that can send 2 messages per user, but each message has $O(B)$ bits. 
\citet{ghazi2020privatecounting} show how to achieve $O(\log n)$ error with $1 + \softO(B/n)$ messages each of $O(\log B)$ bits.  In Section \ref{sec:smallB} we show how our framework also can recover this result easily.  
Frequency estimation and heavy hitter detection have also been studied extensively in local-DP \cite{wang2017locally, erlingsson2014rappor,apple:overview,bassily2015local,bassily2017practically,feldman2021lossless,wennan2020federated}, but the optimal error in local-DP is $\tilde{\Theta}(\sqrt{n})$.


Compared with local-DP, shuffle-DP allows us to achieve a much smaller error, but it relies on the availability of an anonymizer or shuffler.  From its early conception \cite{chaum1982untraceable}, many practical implementations of shufflers based on different technologies have been released, including mix networks \cite{chaum1982untraceable, danezis2003Mixminion}, onion routing \cite{reed1998onion, roger2004tor}, trusted nodes/hardware \cite{reiter1998crowds, Bittau2017prochlo}, etc.

\section{Frequency Estimation}
	
\subsection{A Balls-into-bins mechanism} \label{sec:bib}
We first study the privacy of a simple balls-into-bins mechanism $\MM^\BIB$, shown in Algorithm~\ref{algo:bib}, in the central-DP model.  This mechanism, as we will see shortly, is equivalent to our shuffle-DP mechanisms with appropriate settings of its parameters $m,s,k,n,p$.  The input to $\MM^\BIB$ is some $S\subseteq [m]$ such that $|S|=s$, and any pair of inputs $S$ and $S'$ are considered neighbors.  The set $S$ can be thought of as $s$  special bins, chosen out of a total of $m$ bins.  The mechanism throws one \textit{real} ball and (expected) $k+np$ \textit{noisy} balls into the $m$ bins, and the numbers of balls in all the bins are taken as the output.  Specifically, it throws the real ball into one of the $s$ special bins uniformly at random.  Then, it throws $k$ noisy balls uniformly at random into all the bins. Finally, it flips a biased coin $n$ times (we use $\Ber(p)$ to denote a Bernoulli random variable with parameter $p$ in the algorithm), and for each heads we get, it throws a noisy ball into one bin chosen uniformly at random from all bins. All the balls are thrown independently. Clearly, for any $S\ne S'$, the distributions of the noisy balls are identical, whose purpose is to hide the location of the real ball.  Since the location of the real ball is randomly chosen from the $s$ special bins, a larger $s$ makes it easier to hide (in the extreme case $s=m$, no noisy ball is needed).  The theorem below formalizes this intuition:
	
	\begin{algorithm}
		\caption{Balls-into-bins Mechanism $\MM^{\BIB}$}	\label{algo:bib}
		\KwPublic{$m,s,k,n,p$}
		\KwIn{A set $S \subseteq [m]$ such that $|S| = s$}
		\KwOut{A multiset $\OO \subseteq [m]$}
		Choose $x \in S$ uniformly at random\;
		$\OO \leftarrow \{x\}$\;
		\For{$i\gets1$ \KwTo $k$}{ 
			Choose $x \in [m]$ uniformly at random\; 
			$\OO \leftarrow \OO \uplus \{x\}$ \tcp*{Increase the multiplicity of $x$ by 1 ($\uplus$  stands for the union operation on multisets).}
		}
		\For{$i\gets1$ \KwTo $n$}{
			$y \gets \Ber(p)$\;
			\If{$y=1$}{ 
				Choose $x \in [m]$ uniformly  at random\;
				$\OO \leftarrow \OO \uplus \{x\}$\;
			}
		}
		\KwRet{$\OO$}\;
	\end{algorithm}
	
\begin{theorem}
	\label{thm:bibdp}
	For any $0 < \varepsilon \leq 3$ and $0 < \delta < 1$, the mechanism $\MM^{\BIB}$ is $(\varepsilon, \delta)$-differentially private if $k+np \ge \frac{32\ln(2/\delta)}{\varepsilon^2}\cdot\frac{m}{s}$. 
\end{theorem}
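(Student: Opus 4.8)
The plan is to first derive an exact closed form for the output distribution of $\MM^{\BIB}$, which collapses the privacy loss to a single scalar ratio, and then bound that ratio with a Chernoff argument; the hypothesis on $k+np$ is exactly what makes the Chernoff bound kick in.

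\emph{Step 1: an exact likelihood ratio.} Identify the output multiset with its histogram $(c_1,\dots,c_m)$, and let $T=\sum_j c_j\ge k+1$ be the total number of balls. The number of noisy balls is $N=k+Z$ with $Z\sim\Bin(n,p)$ independent of everything, and conditioned on $N$ the noisy histogram is $\mathrm{Multinomial}(N;1/m,\dots,1/m)$; the real ball then adds $1$ to a uniformly random coordinate $j\in S$. For an output $o$ with total $T$ we need $Z=T-1-k$ and the noisy histogram to equal $o-e_j$, so
\[
\Pr[\MM^{\BIB}(S)=o]=\frac{1}{s}\,\Pr[Z=T-1-k]\,m^{-(T-1)}\sum_{j\in S}\binom{T-1}{o-e_j}.
\]
The key observation is $\binom{T-1}{o-e_j}=\frac{(T-1)!}{\prod_i o_i!}\cdot o_j$ (valid even when $o_j=0$, both sides being $0$), so the whole prefactor $C(o):=\Pr[Z=T-1-k]\,m^{-(T-1)}\,(T-1)!/\prod_i o_i!$ is \emph{independent of $S$} and $\Pr[\MM^{\BIB}(S)=o]=\frac{C(o)}{s}\sum_{j\in S}o_j$. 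Hence for any neighbors $S\neq S'$ (of the same size, possibly differing in many elements),
\[
\frac{\Pr[\MM^{\BIB}(S)=o]}{\Pr[\MM^{\BIB}(S')=o]}=\frac{\sum_{j\in S}o_j}{\sum_{j\in S'}o_j}=\frac{X_S(o)}{X_{S'}(o)},
\]
where $X_T(o)$ is the number of balls landing in the bins of $T$. By the privacy-loss formulation of DP stated above, it suffices to prove $\Pr_{o\sim\MM^{\BIB}(S)}\!\left[X_S(o)>e^{\varepsilon}X_{S'}(o)\right]\le\delta$.

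\emph{Step 2: distribution of $X_S$ and $X_{S'}$.} Under $o\sim\MM^{\BIB}(S)$ the real ball is in $S$, so $X_S=1+A$ where $A$ counts the noisy balls in $S$, while $X_{S'}\ge B$ where $B$ counts the noisy balls in $S'$ (the real ball's contribution to $X_{S'}$ is nonnegative, and dropping it only makes the ratio larger). Each of the $k$ deterministic noisy balls lands in $S$ independently w.p.\ $s/m$, and each of the $n$ coins contributes a ball to $S$ w.p.\ $ps/m$; thus $A$ (and likewise $B$) is a sum of independent Bernoullis with $\E[A]=\E[B]=(k+np)\,\tfrac{s}{m}=:\lambda$, and the hypothesis reads $\lambda\ge 32\ln(2/\delta)/\varepsilon^2$. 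It now suffices to bound $\Pr[\,1+A>e^{\varepsilon}B\,]$.

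\emph{Step 3: Chernoff.} Contain the bad event in $\{A>\lambda e^{\varepsilon/2}-1\}\cup\{B<\lambda e^{-\varepsilon/2}\}$: off this event, $1+A\le \lambda e^{\varepsilon/2}\le e^{\varepsilon/2}\cdot e^{\varepsilon/2}B=e^{\varepsilon}B$. Each piece is a one-sided large-deviation event for a sum of independent Bernoullis of mean $\lambda$, with relative deviation $e^{\pm\varepsilon/2}-1$ (the $-1/\lambda$ correction on the upper side is harmless since $\lambda\gtrsim 1$, and one checks the deviation stays positive because $e^{\varepsilon/2}-1>\varepsilon/2>1/\lambda$). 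Applying the multiplicative Chernoff bound in its quadratic or linear regime as appropriate, and using $0<\varepsilon\le 3$ to keep the deviations in the usable range, each event has probability at most $\delta/2$; a union bound then gives the claim.

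The main obstacle is Step 1 — extracting the clean closed form, and hence the exact likelihood ratio $X_S/X_{S'}$, from the balls-into-bins randomness; once that is in hand the rest is a routine (if constant-sensitive) Chernoff computation, and the constant $32$ is comfortably large enough even at $\varepsilon=3$ and after absorbing the $+1$ from the real ball.
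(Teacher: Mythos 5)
Your proposal is correct and follows essentially the same route as the paper's proof: the exact multinomial computation yielding the likelihood ratio $\sum_{j\in S}o_j/\sum_{j\in S'}o_j$, the reduction to the binomial counts of noisy balls in $S$ and $S'$ with mean $(k+np)s/m$, and a two-sided Chernoff bound (the paper uses thresholds $(1+\varepsilon/3)\mu$ and $(1-\varepsilon/4)\mu$ instead of your $e^{\pm\varepsilon/2}$). Your use of a union bound over the two deviation events is in fact slightly cleaner than the paper's product of the two probabilities, since the counts in $S$ and $S'$ are not independent.
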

	
\begin{proof}
The proof follows the framework of Lemma~4.4 in \cite{badih2021on}. Treating the numbers of balls in the $m$ bins as an $m$-dimensional vector, we use $\MM(S) \in \mathbb{N}^m$ to denote the output of $\MM^\BIB$ with public parameters $m,s,k,n,p$ on input $S$.
We prove the following inequality, which is equivalent to the DP definition in \eqref{eq:DP}:
\begin{equation}
	\label{dpdef}
	\Pr_{\bm{W} \sim{\MM}(S) } \left[ \frac{\Pr[{\MM}(S) = \bm{W}]}{\Pr[{\MM}(S') = \bm{W}]} \geq e^{\varepsilon} \right] \leq \delta.
\end{equation}
		
We first consider mechanism $\MM_0^\BIB$ that only throws noisy balls (i.e., skip the first two lines in Algorithm \ref{algo:bib}). Let $\MM_0\in \mathbb{N}^m$ be the output vector of $\MM_0^\BIB$; note that $\MM_0$ does not depend on $S$.
For any $\bm{W}=(w_i)_{i=1}^m\in\mathbb{N}^m$, denote $w=\sum_{i=1}^m w_i$. Then when $k\leq w\leq k+n$, we have
		\[\Pr[\MM_0=\bm{W}]=\binom{n}{w-k}p^{w-k}(1-p)^{n-w+k}\frac{w!}{\prod_{i=1}^m w_i!}m^{-w}.\]
		
Now consider $\MM^\BIB$. Let $I$ be the bin chosen by the real ball (line 1 in Algorithm~\ref{algo:bib}) and $\bm{e_i}$ be a length-$m$ one-hot vector where the $i$-th coordinator is 1 and other coordinators are all 0. 
For any $\bm{W}=(w_i)_{i=1}^m\in\mathbb{N}^m$, denote $w=\sum_{i=1}^m w_i$. Then when $1+k\leq w\leq 1+k+n$, we have
\begin{align*}
	&\Pr[\MM(S)=\bm{W}]\\
	=&\sum_{i\in S}\Pr[\MM(S)=\bm{W}\mid I=i]\cdot\Pr[I=i]\\
	=&\sum_{i\in S} \Pr[\MM_0=\bm{W}-\bm{e_i}]\cdot s^{-1}\\
	=&\binom{n}{w-1-k}p^{w-1-k}(1-p)^{n-w+1+k}\cdot\frac{(w-1)!}{s\prod_{i=1}^m w_i!} \sum_{i\in S}w_i .
\end{align*}
Therefore, inequality~\eqref{dpdef} is equivalent to 
\[
\Pr _ {\bm{W} \sim{\MM}(S) } \left[ \frac{\sum_{i \in S} w_i}{\sum_{i \in S'} w_i} \geq e^{\varepsilon} \right] \leq \delta.
\]

We use $\Bin(n,p)$ to denote a binomial distribution with parameters $n$ and $p$.		
	Notice that when $\bm{W}\sim \MM(S)$, $\sum_{i \in S} w_i\sim 1+\Bin(k,\frac{s}{m})+\Bin(n,\frac{ps}{m})$, while $\sum_{i \in S'} w_i\sim \Ber(|S\cap S'|/s)+\Bin(k,\frac{s}{m})+\Bin(n, \frac{ps}{m})$. Therefore, it suffices to prove
	\begin{equation}
		\label{binodp}
		\Pr\left[ \frac{1 + X_1}{X_2} \geq e^{\varepsilon} \right] \leq \delta,
	\end{equation}
	where $X_1, X_2 \sim \Bin(k,\frac{s}{m})+\Bin(n, \frac{ps}{m})$. Note that $X_1$ and $X_2$ are not necessarily independent.
	
	Since $\mu\coloneqq \E[X_1]=(k+np)\frac{s}{m}\geq\frac{32\ln(2/\delta)}{\varepsilon^2}$, by Chernoff bound,	we have 
	$$\Pr[X_1 \ge (1+\varepsilon/3)\mu] \le  \exp\left(-\frac{\mu}{3} \left(\frac{\varepsilon}{3}\right)^2\right) < \frac{\delta}{2},	$$
	$$\Pr[X_2 \le (1-\varepsilon/4)\mu] \le  \exp\left(-\frac{\mu}{2} \left(\frac{\varepsilon}{4}\right)^2\right) < \frac{\delta}{2}.	$$
	
	Since $\delta < 1$, we have $\mu > (32\ln 2)/\varepsilon^2>18/\varepsilon^2$. Also note that $e^{\varepsilon/3} \geq 1+\varepsilon/3+\varepsilon^2/18$ and $1-\varepsilon/4 > e^{-\varepsilon/2}$ for $0<\varepsilon \leq 3$, so we have
	$$\frac{1+(1+\varepsilon/3)\mu}{(1-\varepsilon/4)\mu}  \leq \frac{\varepsilon^2/18+(1+\varepsilon/3)}{1-\varepsilon/4}\leq \frac{e^{\varepsilon/3}}{e^{-\varepsilon/2}}\leq e^{\varepsilon}.$$
	Finally, by a union bound, we obtain
\[			\Pr\left[\frac{1 + X_1}{X_2} \ge e^{\varepsilon} \right] \\
		\le  \Pr [X_1 \ge  (1+\varepsilon/3)\mu] + \Pr [X_2 \le (1-\varepsilon/4)\mu] \le \delta. \qedhere \]
\end{proof}

\paragraph{Remark}
From the proof of Theorem~\ref{thm:bibdp}, we see that the derivation from 
\begin{equation}\label{eq:old}
    (k+np)\cdot\frac{s}{m} \ge \theta := \frac{32\ln(2/\delta)}{\varepsilon^2}
\end{equation}
to the key inequality~\eqref{binodp} is quite loose. In particular, we chose a large constant $32$ so as to simplify the proof.  In practice, when concrete values of $n,m,s,\varepsilon,\delta$ are given, we can try minimizing $\theta$ as long as \eqref{binodp} holds.  
Specifically, we consider the worst case $S\cap S'=\varnothing$, in which case $\Pr[(1+X_1)/X_2\ge e^\varepsilon]$ is a function of $\theta$ that can be computed in $O(n)$ time.  Then, we perform a binary search for the smallest $\theta$ such that \eqref{binodp} holds.  We stop the binary search when the range has narrowed down to $1/n$, which gives us a near-optimal $\theta$ in $O(n \log n)$ time.  Note that this optimization step is data-independent, so $\theta$ can be pre-computed. Furthermore, it works without the technical condition $0<\varepsilon\le 3$ in  Theorem~\ref{thm:bibdp}.

\subsection{The Protocol for a Small Domain}
\label{sec:smallB}
As a warm-up, we first describe a protocol for a small domain $B<\tilde{O}(n)$.  It is a simple modification of the single-message shuffle protocol of \cite{borja2019the}. In their local randomizer, each user sends its input with probability $\rho=\softO(B/n)$, and sends a \textit{blanket noise} otherwise. The blanket noise is a random element uniformly chosen from $[B]$. When $B=\sqrt{n}$, their analyzer incurs $\softO(n^{1/4})$ error, matching the lower bound of single-message protocols \cite{badih2021on}. 
	
We extend their result to a $(1+\rho)$-message protocol $\PP^{\FE0}=(\RR^{\FE0},\AAA^{\FE0})$ for better accuracy, where $\rho=\softO(B/n)$. The local randomizer works as follows: Each user always sends its input; in addition, with probability $\rho$, it also sends a blanket noise (see Algorithm~\ref{algo:lr-fe0}).

	\begin{algorithm}
		\caption{Local Randomizer $\RR^{\FE0}$}\label{algo:lr-fe0}
		\KwPublic{$B, n, \varepsilon, \delta$}
		\KwIn{$x \in [B]$}
		\KwOut{A multiset $\TT \subseteq [B]$}
		$\TT \gets \{x\}$\;
		$\rho\gets\frac{32 \ln(2/\delta)}{\varepsilon^2} \cdot \frac{B}{n}$\;
		$y\gets\Ber(\rho)$\;
		\If{$y=1$}{
			Choose $x'$ from $[B]$ uniformly\;
			$\TT \gets \TT \uplus \{x'\}$\;
		}
		\KwRet{$\TT$}\;
	\end{algorithm}
	
We first show that this local randomizer satisfies DP:

	\begin{lemma}[Privacy of $\RR^{\FE0}$]
		\label{lem:prife0}
		For any $n, B \in \mathbb{N}$, $0 < \varepsilon \leq 3$, and $0 < \delta < 1$, let $\rho = \frac{32 \ln(2/\delta)}{\varepsilon^2} \cdot \frac{B}{n}$. If $\rho\leq 1$, then $\RR^{\FE0}$ satisfies $(\varepsilon, \delta)$-shuffle DP.
	\end{lemma}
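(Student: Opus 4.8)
The plan is to reduce the privacy of $\SSS\circ(\RR^{\FE0})^n$ to the privacy of the balls-into-bins mechanism already established in Theorem~\ref{thm:bibdp}, via a post-processing argument. Fix neighboring datasets $\bm{D}\sim\bm{D}'$ that differ only in user $i$, say $x_i=a$ in $\bm{D}$ and $x_i=a'$ in $\bm{D}'$ with $a\neq a'$. First I would write the shuffled output $\MM(\bm{D})=\SSS((\RR^{\FE0})^n(\bm{D}))=\uplus_{j=1}^n\RR^{\FE0}(x_j)$ as a disjoint union of three pieces at the level of multisets: (i) the fixed multiset $C=\uplus_{j\neq i}\{x_j\}$ of the unchanged real inputs; (ii) the changed real input, i.e.\ $\{a\}$ for $\bm{D}$ and $\{a'\}$ for $\bm{D}'$; and (iii) a blanket-noise multiset $\NN$ in which each of the $n$ users independently, with probability $\rho$, contributes one element drawn uniformly from $[B]$. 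The distribution of $\NN$ does not depend on the value held by user $i$, so $\MM(\bm{D})=C\uplus\{a\}\uplus\NN$ and $\MM(\bm{D}')=C\uplus\{a'\}\uplus\NN$ with the same $\NN$.

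Next I would identify $\{a\}\uplus\NN$ with the output of $\MM^{\BIB}(m,s,k,n,p)$ on input $S=\{a\}$ for the parameter choice $m=B$, $s=1$, $k=0$, $p=\rho$: line 1 places the ``real'' ball into the unique special bin $a$, the $k$-loop is vacuous, and the $n$-loop reproduces exactly the blanket noise $\NN$ (including the blanket noise of user $i$ itself, which is one of the $n$ coins). Likewise $\{a'\}\uplus\NN$ is the output of the same mechanism on input $S'=\{a'\}$, and since $|S|=|S'|=1$ and $S\neq S'$, these are neighbors in the sense of Theorem~\ref{thm:bibdp}. Plugging in the parameters, $k+np=n\rho=\frac{32\ln(2/\delta)}{\varepsilon^2}\cdot B=\frac{32\ln(2/\delta)}{\varepsilon^2}\cdot\frac{m}{s}$, so the hypothesis of Theorem~\ref{thm:bibdp} holds (together with the standing assumptions $0<\varepsilon\le 3$, $0<\delta<1$, $\rho\le1$). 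Hence $\{a\}\uplus\NN$ and $\{a'\}\uplus\NN$ are $(\varepsilon,\delta)$-indistinguishable.

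Finally I would invoke closure of differential privacy under post-processing: $\MM(\bm{D})$ and $\MM(\bm{D}')$ are the images of $\{a\}\uplus\NN$ and $\{a'\}\uplus\NN$ under the fixed (deterministic) map $T\mapsto C\uplus T$, and therefore remain $(\varepsilon,\delta)$-indistinguishable. Since $\bm{D}\sim\bm{D}'$ were arbitrary, $\SSS\circ(\RR^{\FE0})^n$ is $(\varepsilon,\delta)$-differentially private. I do not expect a real obstacle here, as all the probabilistic content lives in Theorem~\ref{thm:bibdp}; the only points requiring care are (a) phrasing the decomposition $\MM(\bm{D})=C\uplus\{a\}\uplus\NN$ precisely as a multiset identity, so that stripping off the common part $C$ is genuinely post-processing and does not covertly use knowledge of which messages are noise, and (b) confirming that the blanket noise generated by user $i$ is accounted for by one of the $n$ coins of $\MM^{\BIB}$ rather than being discarded together with user $i$'s real input. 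Verifying the parameter match $k+np=\frac{32\ln(2/\delta)}{\varepsilon^2}\cdot\frac{m}{s}$ and the hypotheses $\rho\le1$, $\varepsilon\le3$ is routine.
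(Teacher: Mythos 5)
Your proposal is correct and follows essentially the same route as the paper: both reduce the privacy of $\SSS\circ(\RR^{\FE0})^n$ to Theorem~\ref{thm:bibdp} applied to $\MM^{\BIB}(B,1,0,n,\rho)$ with the differing user's input as the single special bin, after stripping off the $n-1$ unchanged real inputs. Your explicit post-processing formulation of that stripping step (the fixed map $T\mapsto C\uplus T$) is just a more rigorous phrasing of what the paper calls ``factoring out the first $n-1$ users.''
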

	
	\begin{proof}
		For any input $\bm{D} = (x_1, x_2, \dots, x_n)$, the view of the analyzer is the multiset $\TT$ which can be split into two parts: 
		\begin{enumerate}
			\item $n$ real inputs from all the users $\{x_i\}_{i=1}^n$, and
			\item the multiset of expected $n\rho$ blanket noises $\TT_{\Noise}\subseteq[B]$.
		\end{enumerate}
		Let $\bm{D}'$ be any neighbouring databases of $\bm{D}$. Then we need to prove:
		\begin{equation}
			\label{eq:fe0-origin}
			\Pr_{\bm{\tau}\sim\TT(\bm{D}) } \left[ \frac{\Pr[\TT(\bm{D}) = \bm{\tau}]}{\Pr[\TT(\bm{D'}) = \bm{\tau}]} \geq e^{\varepsilon} \right] \leq \delta.
		\end{equation}
		
		Without loss of generality, we assume $\bm{D}$ and $\bm{D}'$ differ in the last element, i.e., $\bm{D}' = (x_1, x_2, \dots, x_n')$. 
		By factoring out the first $n-1$ users, it suffices to prove:
		\begin{equation}
			\label{eq:fe0-target}
			\Pr_{\bm{\tau}\sim\TT_\Noise } \left[ \frac{\Pr[\TT_\Noise = \bm{\tau}]}{\Pr[\TT_\Noise\uplus\{x'_n\} = \bm{\tau}\uplus \{x_n\}]} \geq e^{\varepsilon} \right] \leq \delta.
		\end{equation}
		Note that \eqref{eq:fe0-target} is actually stronger than \eqref{eq:fe0-origin}, as the first $n-1$ common users provide additional randomness.  To prove \eqref{eq:fe0-target}, we note that without real inputs from the first $n-1$ users, the shuffled messages correspond to $\MM^\BIB$ with public parameters $m=B,s=1,k=0,n,p=\rho$:
		\begin{itemize}
			\item The set of bins in $\MM^\BIB$ corresponds to $[B]$.
			\item The input set $S$ in $\MM^\BIB$ corresponds to $\{x_n\}$ and $\{x'_n\}$, respectively. 
			\item No fixed $k$ noisy balls, i.e., $k=0$.
			\item The multiset of bins chosen by the expected $np$ noisy balls are $\TT_\Noise$.
		\end{itemize} 
		Then the lemma immediately follows from Theorem~\ref{thm:bibdp}.
	\end{proof}

For a query $x$, the analyzer (Algorithm \ref{algo:a-fe0}) simply counts the number of messages that equal to $x$, followed by a bias-removal step.  

\begin{algorithm}
	\caption{Analyzer $\AAA^{\FE0}$}\label{algo:a-fe0}
	\KwPublic{$B, n, \varepsilon, \delta$}
	\KwIn{A multiset $\TT\subseteq[B]$; element $x$}
	\KwOut{Estimated frequency of $x$}
	$X \gets $ the frequency of $x$ in $\TT$\;
	$\rho\gets\frac{32 \ln(2/\delta)}{\varepsilon^2} \cdot \frac{B}{n}$\;
	$\hat{g}_x \gets X - n\rho/B$\;
	\KwRet{$\hat{g}_x$}\;
\end{algorithm}

\begin{lemma}[Accuracy of $\PP^{\FE0}$]
	\label{lem:accfe0}
	The protocol $\PP^{\FE0}$ has error \[\alpha=\max \left\{3\ln(2B/\beta), \sqrt{3\ln(2B/\beta) \cdot \frac{32\ln(2/\delta)}{\varepsilon^2}}\right\}.\]
\end{lemma}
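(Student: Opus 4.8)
The plan is to show that, conditioned on an arbitrary fixed dataset $\bm D=(x_1,\dots,x_n)$, the estimation error on every query element is exactly the deviation of a single $\Bin(n,\rho/B)$ random variable from its mean, and then to apply a Chernoff-type tail bound followed by a union bound over the $B$ query elements.

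First I would pin down the randomness. The multiset $\TT$ handed to the analyzer is the disjoint union of the $n$ real inputs $x_1,\dots,x_n$ (deterministic once $\bm D$ is fixed) and the blanket noises. Fix a query element $j\in[B]$ and let $f_j=\sum_{i=1}^n\mathbbm 1[x_i=j]$ be its true frequency. Then the count $X$ of $j$ in $\TT$ equals $f_j+Z_j$, where $Z_j$ is the number of blanket noises equal to $j$. User $i$ emits a blanket noise with probability $\rho$ and, when it does, that noise equals $j$ with probability $1/B$; since these events are independent across the $n$ users, $Z_j\sim\Bin(n,\rho/B)$ with mean $\lambda\coloneqq n\rho/B=\frac{32\ln(2/\delta)}{\varepsilon^2}$. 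Because the analyzer outputs $\hat g(j)=X-n\rho/B=f_j+(Z_j-\lambda)$, the error on query $j$ is exactly $|Z_j-\lambda|$, and it suffices to prove $\Pr[\,\exists j\in[B]:|Z_j-\lambda|>\alpha\,]<\beta$.

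Next I would bound each tail of $Z_j$ by $\beta/(2B)$; writing $L\coloneqq\ln(2B/\beta)$, the target value is $\alpha=\max\{3L,\sqrt{3L\lambda}\}$. For the lower tail I would use $\Pr[Z_j\le\lambda-\alpha]\le\exp(-\alpha^2/(2\lambda))$, which holds for every $\alpha\ge0$; since $\alpha\ge\sqrt{3L\lambda}$ the exponent is at least $3L/2>L$. For the upper tail the subtle point is that $\alpha$ may exceed $\lambda$ (namely when $\lambda<3L$, so that $\alpha=3L$), so the $\eta\le1$ form of the multiplicative Chernoff bound used in the proof of Theorem~\ref{thm:bibdp} no longer applies; instead I would use the Bernstein-type bound $\Pr[Z_j\ge\lambda+\alpha]\le\exp\!\big(-\alpha^2/(2\lambda+2\alpha/3)\big)$ and split into two cases. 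If $\lambda\ge3L$ then $\alpha=\sqrt{3L\lambda}\le\lambda$, so $2\lambda+2\alpha/3\le\tfrac{8}{3}\lambda$ and the exponent is at least $\tfrac{9}{8}L>L$; if $\lambda<3L$ then $\alpha=3L$, so $2\lambda+2\alpha/3<8L$ and the exponent is again at least $\tfrac{9}{8}L>L$. Hence each tail has probability at most $e^{-L}=\beta/(2B)$.

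Finally I would combine everything: for each fixed $j$, $\Pr[|Z_j-\lambda|>\alpha]\le 2\cdot\beta/(2B)=\beta/B$, and a union bound over all $j\in[B]$ gives $\Pr[\max_j|\hat g(j)-f_j|>\alpha]\le\beta$, with the inequality strict because the exponents obtained above are strictly larger than $L$ (so each per-element, per-tail probability is strictly below $\beta/(2B)$). I do not expect a genuine obstacle; the only place requiring care is the upper-tail estimate when $\alpha>\lambda$, where one must pass from the simple $\eta\le1$ Chernoff bound to its general (Bernstein) form and carry out the short case analysis on whether $\lambda\ge 3\ln(2B/\beta)$.
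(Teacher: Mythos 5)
Your proposal is correct and follows essentially the same route as the paper: reduce the error on each query element to the deviation of a $\Bin(n,\rho/B)$ variable from its mean $n\rho/B=\frac{32\ln(2/\delta)}{\varepsilon^2}$, apply a Chernoff-type tail bound with a case split on whether that mean is at least $3\ln(2B/\beta)$, and finish with a union bound over $[B]$. The only difference is cosmetic — you invoke the additive/Bernstein form of the tail bound where the paper uses the multiplicative form with the $\eta\le 1$ versus $\eta>1$ split — and both yield the stated $\alpha$.
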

	
\begin{proof}
	Let $X$ be the frequency of $x$ in the output multiset $\TT$, then $X=g_x+\Bin(n,\rho/B)$, where the second term comes from the blanket noises. Since $\hat{g}_x=X-n\rho/B$, we conclude $\E[\hat{g}_x]=g_x$, so $\AAA^{\FE0}$ provides an unbiased estimation.
	
	Let $Y=\Bin(n,\rho/B)$. Define $\mu\coloneqq\E[Y]$, then by Chernoff bound, for $0 < \eta \leq 1$, 
	\[
	\Pr\left[ |Y - \mu| > \eta\mu\right] < 2 \exp(-\frac{\eta^2\mu}{3}).
	\]
	And for $\eta > 1$,
	\[
	\Pr\left[ |Y - \mu| > \eta\mu\right] < 2 \exp(-\frac{\eta\mu}{3}).
	\]
	If $\mu \geq 3\ln(2B/\beta)$, by setting $\eta = \sqrt{\frac{3\ln(2B/\beta)}{\mu}} \leq 1$ we get
	$$\Pr\left[ |Y - \mu| > \sqrt{3\ln(2B/\beta)\mu}\right) < 2 \exp(-\frac{\eta^2\mu}{3}) = \frac{\beta}{B}.$$
	Otherwise, by setting $\eta = \frac{3\ln(2B/\beta)}{\mu} > 1$ we get
	$$\Pr \left[ |Y - \mu| > 3\ln(2B/\beta)\right) < 2 \exp(-\frac{\eta\mu}{3}) = \frac{\beta}{B}.$$
	In conclusion, 
	\[
	\Pr \left[ |Y - \mu| > \max \left\{3\ln(2B/\beta), \sqrt{3\ln(2B/\beta) \cdot \mu} \right \} \right] < \frac{\beta}{B}.
	\]
	Therefore, 
	\[
	\Pr \left[ \left|\hat{g}_x - g_x\right| > \alpha \right] < \frac{\beta}{B}.
	\]
	By union bound, 
	\[
	\Pr \left[ \bigvee_{x\in [B]}|\hat{g}_x - g_x| > \alpha \right] < \beta,
	\]
	which implies the result.
\end{proof}

The number of messages sent per user, message size, and query time are obvious. Thus we have:
\begin{theorem}
For $0 < \varepsilon \leq 3$, $0 < \delta < 1$, and any $n, B \in \mathbb{N}$ such that $B\leq\frac{\varepsilon^2 n}{32\ln(2/\delta)}$, $\PP^{\FE0}$ is a private-coin $(\varepsilon, \delta)$-shuffle DP frequency estimation protocol  that sends $1 + O\left(\frac{\log(1/\delta)}{\varepsilon^2} \cdot \frac{B}{n}\right)$ messages per user in expectation, each consisting of $O(\log B)$ bits. Any frequency query can be answered in expected $O(n)$ time with error $O\left(\log{\frac{B}{\beta}} + \frac{1}{ \varepsilon} \sqrt{\log{\frac{B}{\beta}} \log(1/\delta)}\right)$.
\label{the:fe0}
\end{theorem}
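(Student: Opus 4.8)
The plan is to obtain the theorem by assembling \Cref{lem:prife0} and \Cref{lem:accfe0} and then reading off the message and time complexity of $\PP^{\FE0}$. For privacy, first note that $\PP^{\FE0}=\SSS\circ(\RR^{\FE0})^n$ with $\rho=\frac{32\ln(2/\delta)}{\varepsilon^2}\cdot\frac{B}{n}$, and that the hypothesis $B\le\frac{\varepsilon^2 n}{32\ln(2/\delta)}$ is exactly the condition $\rho\le 1$ required by \Cref{lem:prife0}; hence the protocol is $(\varepsilon,\delta)$-differentially private. It is private-coin because each invocation of $\RR^{\FE0}$ uses only the user's own local randomness (a $\Ber(\rho)$ draw and one uniform element of $[B]$).

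For accuracy, I would instantiate \Cref{lem:accfe0} at a constant failure probability, e.g.\ $\beta=1/3$. This yields maximum error
\[
\alpha=\max\left\{3\ln(6B),\ \sqrt{3\ln(6B)\cdot\tfrac{32\ln(2/\delta)}{\varepsilon^2}}\right\}
  = O\!\left(\log B+\tfrac{1}{\varepsilon}\sqrt{\log B\,\log(1/\delta)}\right)
\]
with probability at least $2/3$. The only thing to verify here is the asymptotic rewriting: the first term inside the $\max$ is $O(\log B)$ and the second is $O\big(\tfrac1\varepsilon\sqrt{\log B\log(1/\delta)}\big)$, so the sum $\log B+\tfrac1\varepsilon\sqrt{\log B\log(1/\delta)}$ is a valid upper bound in both regimes of the $\max$. (Alternatively one can keep $\beta$ as a free parameter and absorb the resulting $\log(1/\beta)$ into the $\softO$-notation introduced earlier.)

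Finally, for complexity: $\RR^{\FE0}$ deterministically emits the real input and independently emits one extra uniform message with probability $\rho$, so each user sends $1+\rho=1+O\!\left(\frac{\log(1/\delta)}{\varepsilon^2}\cdot\frac{B}{n}\right)$ messages in expectation, each being a single element of $[B]$ and hence $O(\log B)$ bits. To answer a query on $x$, $\AAA^{\FE0}$ scans the shuffled multiset $\TT$ once to count the occurrences of $x$ and subtracts the fixed correction $n\rho/B$; since $|\TT|=n+\Bin(n,\rho)$ has expectation $n(1+\rho)\le 2n$ (using $\rho\le 1$), this takes $O(n)$ time in expectation. I expect no genuine obstacle — this is a short assembly — and the only points needing any care are the expected running-time estimate, which relies on $\rho\le 1$ to keep $\E|\TT|=O(n)$, and the asymptotic simplification of $\alpha$ described above.
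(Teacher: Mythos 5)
Your proposal is correct and matches the paper's own derivation, which likewise obtains the theorem by combining Lemma~\ref{lem:prife0} (noting that the hypothesis on $B$ is exactly $\rho\le 1$) with Lemma~\ref{lem:accfe0} at constant $\beta$ and reading off the communication and query-time costs. No issues.
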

	
Note that for $B = o\left(\frac{\varepsilon^2 n}{ \log(1/\delta)}\right)$, $\PP^{\FE0}$ is a $(1+o(1))$-message protocol with error $\tilde{O}(1)$.

\subsection{The Protocol for a Large Domain}
The protocol above can be easily extended to support a larger domain.  For $B\ge \tilde{\Omega}(n)$, $\rho$ will be greater than $1$, but we can ask each user to first send $\lfloor \rho \rfloor$ blanket noises, and then another blanket noise with probability $\rho-\lfloor \rho \rfloor$.  Privacy still holds, by appropriately modifying the proof of Lemma \ref{lem:prife0}, but the number of messages will be $1+\rho = \tilde{O}(B/n)$.   In this section, we provide another protocol $\PP^{\FE1}$, which sends $1+o(1)$ messages for an arbitrarily large $B$ while achieving $\softO(1)$ error.
	
The idea is to use hashing to reduce the domain size. Specifically, we use the classical hash function $h_{u,v}:~[B]\to [b]$ to hash an element in $[B]$ to a bin in $[b]$, where $h_{u,v}(x)=((ux+v)\bmod{q})\bmod{b}$, $q\in[B,2B)$ is a prime\footnote{This $q$ must exist due to  Bertrand's postulate.}, and $b$ is a parameter that will control the trade-off between communication cost and accuracy.
Let $\HH = \{h_{u,v}\mid (u,v)\in\{1,\dots,q-1\}\times[q]\}$. It is well-known that $\HH$ is a universal family with collision probability $\pcol(x,y)\coloneqq\Pr[h_{u,v}(x)=h_{u,v}(y)]<\frac{1}{b}$. In fact, for this family, the collision probability is the same for all $x\ne y$: $\pcol(\cdot,\cdot) \equiv \lfloor q/b \rfloor((q\bmod{b})+q-b)/(q(q-1))$, which is simply denoted as $\pcol$.
	
\subsubsection{Local randomizer (Algorithm~\ref{algo:lr-fe1})}
Assume a user holds a private input $x$. It uniformly randomly chooses $(u,v)\in\{1,\dots,q-1\}\times[q]$, and sends $(u,v, h_{u,v}(x))$ to the shuffler, called the \textit{real output}. 
It also sends $\rho$ \textit{blanket noises} to the shuffler, where $\rho=\frac{32 \ln(2/\delta)}{\varepsilon^2} \cdot \frac{b}{n}$.  
Each blanket noise $(u,v,w)$ is uniformly chosen from $\{1,\dots,q-1\}\times[q]\times[b]$ independently. Note that when $\rho$ is not an integer, it sends $\lfloor \rho \rfloor$ blanket noises and then with probability $\rho-\lfloor \rho \rfloor$ sends another blanket noise, so that the expected number of blanket noises is $\rho$. The total expected number of messages the user sends is therefore $1+\rho$, including one real output and $\rho$ blanket noises.
	
	\begin{algorithm}
		\caption{Local Randomizer $\RR^{\FE1}$}\label{algo:lr-fe1}
		\KwPublic{$B, b, n, \varepsilon, \delta$}
		\KwIn{$x \in [B]$}
		\KwOut{A multiset $\TT \subseteq \{1,\dots,q-1\} \times [q] \times [b]$}
		Choose $(u,v)$ from $\{1,\dots,q-1\}\times[q]$ uniformly\;
		$\TT \gets \{(u,v, h_{u,v}(x))\}$\;
		$\rho\gets\frac{32 \ln(2/\delta)}{\varepsilon^2} \cdot \frac{b}{n}$\;
		\For{$i\gets 1$ \KwTo $\lfloor \rho \rfloor$}{
			Choose $(u,v,w)$ from $\{1,\dots,q-1\}\times[q]\times[b]$ uniformly\;
			$\TT \gets \TT \uplus \{(u,v,w)\}$\;
		}
		$y \gets \Ber(\rho-\lfloor \rho \rfloor)$\; 
		\If{$y=1$}{
			Choose $(u,v,w)$ from $\{1,\dots,q-1\}\times[q]\times[b]$ uniformly\;
			$\TT \gets \TT \uplus  \{(u,v,w)\}$\;
		}
		\KwRet{$\TT$}\;
	\end{algorithm}
	
	\begin{lemma}[Privacy of $\RR^{\FE1}$]
		\label{lem:prife1}
		For any $n, B, b \in \mathbb{N}$, $0 < \varepsilon \leq 3$, and $0 < \delta < 1$, let $\rho = \frac{32 \ln(2/\delta)}{\varepsilon^2} \cdot \frac{b}{n}$. Then $\RR^{\FE1}$ satisfies $(\varepsilon, \delta)$-shuffle DP.
	\end{lemma}
		\begin{proof}
		The proof is similar to the proof of Lemma~\ref{lem:prife0}: First we exclude $\{(u_i,v_i,h_{u_i,v_i}(x_i))\}_{i=1}^{n-1}$, i.e., the real outputs from the first $n-1$ users. Then the privacy follows from that of $\MM^\BIB$ with parameters $m=(q-1)qb,s=(q-1)q,k=n\lfloor \rho \rfloor,n,p=\rho-\lfloor \rho \rfloor$:
			\begin{itemize}
			\item The set of bins in $\MM^\BIB$ are $\{1,\dots,q-1\} \times [q] \times [b]$.
			\item The input set $S$ in $\MM^\BIB$ corresponds to $\{(u,v,w) \mid h_{u,v}(w)=x_n\}$ and $\{(u,v,w) \mid h_{u,v}(w)=x'_n\}$, respectively. 
			\item Each user first throws $\lfloor \rho \rfloor$ noisy balls, so $k=n\lfloor \rho \rfloor$.
			\item Each user then throws a noisy ball with probability $p=\rho-\lfloor \rho \rfloor$. \qedhere
		\end{itemize}
	\end{proof}
	
	\begin{table*}
		\begin{center}
			\renewcommand\arraystretch{1.5}
			\begin{tabular}{c | c | c | c | c | c | c  } 
				\hline
				\multicolumn{2}{c|}{Protocol} & Messages per user & Message size  & Error & Query time (single) & Query time (all) \\ [0.5ex] 
				\hline\hline
				\multicolumn{2}{c|}{} & 2 & $B$ & $O(\log n)$ & $O(n)$ & $O(Bn)$ \\ 
				\cline{3-7}
				\multicolumn{2}{c|}{CZ \cite{albert2021privatehistogram}} & 2 & $O(\log n + B \log^2n / n)$ & $O(\log n)$ & $O(n \log n)$ & $O(n \log n + B \log^2 n)$ \\ 
				\cline{3-7}
				\multicolumn{2}{c|}{} & $O(\log n)$ & $O(\log^5 n)$ & $O(\log^2 n)$ & $O(n\log n)$ & $O(n\log n + B\log n)$ \\ 
				\hline
				\multicolumn{2}{c|}{GGKPV \cite{badih2021on}}   & $O(\log n)$ & $O(\log^2 n)$ & $O(\log n)$ & $O(n \log^3 n)$ & $O(n \log^4 n+B\log^2 n)$\\ 
				\hline
				\multicolumn{2}{c|}{This work (small $B$)} & $1 + O(B\log n/n)$  & $O(\log n)$ & $O(\log n)$ & $O(n)$ & $O(n)$ \\ 
				\hline
				\multicolumn{2}{c|}{This work (large $B$)} & $1 + O(b\log n/n)$ & $O(\log n)$ & $O(\log n+\sqrt{\frac{n}{b}\log n})$ & $O(n+b\log n)$ & $O(Bn/b+B\log n)$\\ 
				\hline
				Above with  & $c=1$ & $O(1)$ & $O(\log n)$ & $O(\log n)$ & $O(n)$ &$O(B\log n)$ \\
				\cline{2-7}
				$b=\frac{n}{\log^c n}$ & $c>1$ & $1+o(1)$ & $O(\log n)$ & $O(\log^{\frac{c+1}{2}} n)$ & $O(n)$ & $O(B\log^c n)$ \\ 
				\hline	\multicolumn{2}{c|}{Corollary~\ref{col:hhdforfe} ($c>2$)} & $1 + o(1)$  & $O(\log n)$ & $O(\log^c n)$ & - & $O(n^2/\log^c n)$ \\ 
				\hline
			\end{tabular}
		\end{center}
		\caption{Comparison with prior works for frequency estimation, assuming $\varepsilon=\Theta(1)$ and $\log n = \Theta(\log(1/\delta)) =\Theta(\log B) = \Theta(\log(1/\beta))$.}
		\label{tab:fecomp}
	\end{table*}
	
	\subsubsection{Analyzer (Algorithm~\ref{algo:a-fe1})} The analyzer's view is a multiset of tuples $(u,v,w)$. To query for the frequency of an element $x$, it scans all tuples received and count the number of tuples satisfying $h_{u,v}(x) =w$. Then we remove the bias caused by hash collisions and the blanket noises.
	
	\begin{algorithm}
		\caption{Analyzer $\AAA^{\FE1}$}\label{algo:a-fe1}
		\KwPublic{$B, b, n, \varepsilon, \delta$}
		\KwIn{An element $x\in[B]$; a multiset $\TT\subseteq\{1,\dots,q-1\}\times[q]\times[b]$}
		\KwOut{Estimated frequency of $x$}
		$\pcol\gets \lfloor q/b \rfloor((q\bmod{b})+q-b)/(q(q-1))$\;
		$X \gets 0$\;
		\For{$(u,v,w) \in \TT$}{ \If{$h_{u,v}(x) = w$}{$X \gets X + 1$}}
		$\rho\gets\frac{32 \ln(2/\delta)}{\varepsilon^2} \cdot \frac{b}{n}$\;
		$\hat{g}_x \gets (X - n\rho/b-n\pcol)/(1-\pcol)$\;
		\KwRet{$\hat{g}_x$}\;
	\end{algorithm}
	
	\begin{lemma}[Accuracy of $\PP^{\FE1}$]
		\label{lem:accfe1}
		The protocol $\PP^{\FE1}$ has error \[\alpha=2\max \left\{3\ln(2B/\beta), \sqrt{3\ln(2B/\beta) \cdot  \left(\frac{n}{b} + \frac{32\ln(2/\delta)}{\varepsilon^2}\right)}\right\}.\]
	\end{lemma}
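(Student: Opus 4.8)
The plan is to mirror the proof of Lemma~\ref{lem:accfe0}; the only new ingredients are the hash-collision term in the counter and the fact that the effective noise mean now depends on the (unknown) true frequency. Fix a query element $x\in[B]$ and let $g(x)$ denote its true frequency in $\bm{D}$. First I would decompose the counter $X$ returned by $\AAA^{\FE1}$ according to the source of each contributing tuple. A real output $(u_i,v_i,h_{u_i,v_i}(x_i))$ from user $i$ is counted iff $h_{u_i,v_i}(x)=h_{u_i,v_i}(x_i)$: this holds with probability $1$ when $x_i=x$, and with probability exactly $\pcol$ when $x_i\ne x$, using that $\HH$ is the universal family described, whose collision probability equals $\pcol$ for \emph{every} pair of distinct elements (and is independent of the other users' random choices). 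A blanket noise $(u,v,w)$, in which $w$ is uniform in $\mathbb{Z}_b$ independently of $(u,v)$, is counted iff $w=h_{u,v}(x)$, which has probability exactly $1/b$; each user emits $\floor{\rho}$ such noises and one more with probability $\rho-\floor{\rho}$. Since all of these indicators are mutually independent, I obtain
\[
X = g(x) + \Bin\!\Bigl(n-g(x),\,\pcol\Bigr) + \Bin\!\Bigl(n\floor{\rho},\,\tfrac1b\Bigr) + \Bin\!\Bigl(n,\,\tfrac{\rho-\floor{\rho}}{b}\Bigr),
\]
with the three binomials independent. Writing $Z:=X-g(x)$, this makes $Z$ a sum of independent Bernoulli variables with mean $\mu:=\E[Z]=(n-g(x))\pcol+n\rho/b$.

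Next I would verify unbiasedness and reduce the claim to a concentration statement for $Z$. Since $n\rho/b=\frac{32\ln(2/\delta)}{\varepsilon^2}$, we have $\E[X]=g(x)(1-\pcol)+n\pcol+n\rho/b$, so the estimator $\hat g(x)=(X-n\rho/b-n\pcol)/(1-\pcol)$ satisfies $\E[\hat g(x)]=g(x)$, and moreover $\hat g(x)-g(x)=(Z-\mu)/(1-\pcol)$. Because $\pcol<1/b\le\tfrac12$ (we may assume $b\ge2$, as $b=1$ makes the hashing trivial), $1/(1-\pcol)<2$, hence $|\hat g(x)-g(x)|<2\,|Z-\mu|$. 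Finally, $\mu\le n\pcol+n\rho/b<\tfrac{n}{b}+\frac{32\ln(2/\delta)}{\varepsilon^2}=:\bar\mu$, so although $\mu$ itself is not known to the analyzer, it is bounded by the quantity appearing in the statement.

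Then I would apply the same two-regime multiplicative Chernoff bound used in Lemma~\ref{lem:accfe0} to $Z$: taking $\eta=\sqrt{3\ln(2B/\beta)/\mu}\le1$ when $\mu\ge 3\ln(2B/\beta)$, and $\eta=3\ln(2B/\beta)/\mu>1$ otherwise, gives in both cases $\Pr\bigl[|Z-\mu|>\max\{3\ln(2B/\beta),\sqrt{3\ln(2B/\beta)\,\mu}\}\bigr]<\beta/B$. Since that threshold is nondecreasing in $\mu$, replacing $\mu$ by $\bar\mu$ only enlarges it, and combining with $|\hat g(x)-g(x)|<2|Z-\mu|$ yields $\Pr[|\hat g(x)-g(x)|>\alpha]<\beta/B$ for $\alpha=2\max\{3\ln(2B/\beta),\sqrt{3\ln(2B/\beta)(n/b+32\ln(2/\delta)/\varepsilon^2)}\}$. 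A union bound over the $B$ possible query elements then gives $\Pr[\max_{x}|\hat g(x)-g(x)|>\alpha]<\beta$, which is the claim.

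I expect the only genuinely delicate step to be the decomposition of $X$ in the first paragraph: one must argue that the collision indicator of each ``wrong'' real output is a $\Ber(\pcol)$ \emph{exactly} rather than merely bounded by $1/b$, that these indicators are independent across users and of all blanket-noise indicators, and that the blanket-noise count splits cleanly into independent binomials despite the integer/fractional treatment of $\rho$ (the product of the ``an extra noise is sent'' indicator and the ``it lands in bin $h_{u,v}(x)$'' indicator being $\Ber((\rho-\floor{\rho})/b)$). Once this bookkeeping is in place, the concentration part is essentially identical to Lemma~\ref{lem:accfe0}, with the extra $n/b$ under the square root arising solely from the collision term.
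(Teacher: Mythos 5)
Your proposal is correct and follows essentially the same route as the paper's proof: the same decomposition of the counter $X$ into real hits, hash collisions, and blanket noise, the same two-regime Chernoff bound applied to the Poisson-binomial noise term, the same bounds $\mu\le n/b+32\ln(2/\delta)/\varepsilon^2$ and $1/(1-\pcol)\le 2$, and the same union bound. The only difference is cosmetic: you split the blanket-noise contribution into $\Bin(n\floor{\rho},1/b)+\Bin(n,(\rho-\floor{\rho})/b)$, which is slightly more careful than the paper's shorthand $\Bin(n,\rho/b)$ but yields the same sum of independent Bernoullis with the same mean.
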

	
	\begin{proof}
		Let $X$ be the number of tuples $(u,v,w)$ in $\TT$ that satisfy $h_{u,v}(x)=w$. We have $X=g_x+\Bin(n-g_x,\pcol)+\Bin(n\lfloor\rho\rfloor,1/b)+\Bin(n,(\rho-\lfloor\rho\rfloor)/b)$, where
		\begin{enumerate}
			\item the first term is the $g_x$ real outputs from the users with input $x$,
			\item the second term is from the $n-g_x$ real outputs from the users with data not equal to $x$, each colliding with $x$ with probability $\pcol$, and
			\item the last two terms are from the blanket noises.
		\end{enumerate}
		Since $\hat{g}_x=(X-n\rho/b-n\pcol)/(1-\pcol)$, we have $\E[\hat{g}_x]=g_x$, i.e., $\PP^{\FE1}$ returns  unbiased estimates.
		
		Let $Y=\Bin(n-g_x,\pcol)+\Bin(n\lfloor\rho\rfloor,1/b)+\Bin(n,(\rho-\lfloor\rho\rfloor)/b)$ which is a sum of $n(\lfloor\rho\rfloor+2)-g_x$ independent Bernoulli variables. Define $\mu\coloneqq\E[Y]$, then  similar to the proof of Lemma~\ref{lem:accfe0}, we have
		\[
		\Pr \left[ |Y - \mu| > \max \left\{3\ln(2B/\beta), \sqrt{3\ln(2B/\beta) \cdot \mu} \right \} \right] < \frac{\beta}{B}.
		\]
		Therefore, 
		\begin{equation}\label{eq:fe1_chernoff}
			\begin{split}
				\Pr \left[ |\hat{g}_x - g_x| > \max \left\{3\ln(2B/\beta), \right. \right.\\
				\left. \left. \sqrt{3\ln(2B/\beta) \cdot \mu} \right \}/(1-\pcol) \right] < \frac{\beta}{B}.
			\end{split}
		\end{equation}
		Recall that $\pcol<1/b$, we have 
		\[\mu=\pcol(n-g_x)+n\rho/b\leq \frac{n}{b}+\frac{32\ln(2/\delta)}{\varepsilon^2}.\]
		Also note that $\pcol\leq1/2$, so inequality~\eqref{eq:fe1_chernoff} can be relaxed to 
		\[
		\Pr \left[ |\hat{g}_x - g_x| > \alpha \right] < \frac{\beta}{B}.
		\]
		Then the lemma follows by a union bound.
	\end{proof}
	
\begin{theorem}\label{the:fe1}
	For any $0 < \varepsilon \leq 3$, $0 < \delta < 1$, and $n, B, b\in \mathbb{N}$ such that $2\le b \le B/2$,  $\PP^{\FE1}$ is a private-coin $(\varepsilon, \delta)$-shuffle DP frequency estimation protocol that sends $1 + O\left(\frac{\log(1/\delta)}{\varepsilon^2} \cdot \frac{b}{n}\right)$ messages per user in expectation, each consisting of $O(\log B)$ bits. The frequency query of any element $x\in[B]$ can be answered in expected $O\left(n+\frac{\log(1/\delta)}{\varepsilon^2} \cdot b\right)$  time with error $O\left(\log{\frac{B}{\beta}} + \sqrt{\log {\frac{B}{\beta}} \left(\frac{n}{b} + \frac{\log(1/\delta)}{\varepsilon^2}\right)} \right)$.  The frequencies of all elements in $[B]$ can be found in expected $O\left(\left(\frac{n}{b} + \frac{\log(1/\delta)}{\varepsilon^2}\right) \cdot B\right)$ time.  
\end{theorem}
	
\begin{proof}
	It only remains to prove the last statement, which is faster than querying each element one by one by a factor of $b$. Such a faster all-frequency algorithm will also be useful for heavy hitter detection, which in turn will be needed for sparse vector summation.
	
	The idea is that, instead of computing the estimated frequency of each $x\in [B]$, we flip the problem around.  Recall that the variable $X$ in Algorithm \ref{algo:a-fe1} is equal to the number of $(u,v,w)$ tuples in $\TT$ such that $h_{u,v}(x)=w$. We initialize this variable to $0$ for all $x\in [B]$. Then, for each tuple $(u,v,w)\in\TT$, we find the set of all $x\in [B]$ such that $h_{u,v}(x)=w$ and increment their corresponding variables.  It can be verified that this set is
		\begin{equation*}
			\begin{split}
				\XX(u,v,w)\coloneqq\left\{u^{-1}(w+i\cdot b-v)\bmod{q}~\middle|~ \right. \\ 
				\left. i\in\left\{0,1,\dots,\left\lfloor{\frac{q-1-w}{b}}\right\rfloor\right\}\right\}\cap [B],
			\end{split}
		\end{equation*}
		where $u^{-1}$ is the multiplicative inverse of $u$ in $\mathbb{Z}_q$. 
		The set $\XX(u,v,w)$ has at most $q/b=O(B/b) $ elements and they can be found in $O(B/b)$ time. Meanwhile $\E[|\TT|] =n+n\rho=O\left(n+b\cdot\frac{\ln(1/\delta)}{\varepsilon^2}\right)$, so the total running time is $O\left(\left(\frac{n}{b} + \frac{\log(1/\delta)}{\varepsilon^2}\right) \cdot B\right)$. 
	\end{proof}
	
	

We compare our protocols with the state-of-the-art frequency estimation protocol \cite{badih2021on} in Table~\ref{tab:fecomp}.  To make the bounds more concise, we set a constant $\varepsilon$, and consider the parameter range $\Theta(\log(1/\delta))= \Theta(\log n) =\Theta(\log B) = \Theta(\log(1/\beta))$.
In particular, by taking $b=n/\log n$ in $\PP^{\FE1}$, our protocol achieves the same error as \cite{badih2021on}, but we reduce the number of messages to $O(1)$, reduce the message size by a $\log n$ factor, and reduce the query time of a single element by a $\log^3 n$ factor.  Recall that we require $b<B/2$, so for $B<2n/\log n$, we use the small-domain protocol, which obtains the same (or better) improvement.  By using a smaller $b$, the number of messages is further reduced to $1+o(1)$, although at the expense of a small increase in the error.  Furthermore, by imposing a dyadic decomposition on the domain, a frequency estimation protocol can be used for solving the \textit{range-counting} problem, so our result immediately improves the range-counting results in \cite{badih2021on} as well.  We omit the rather tedious details. 
	
	\subsubsection{Robust shuffle-DP} Since shuffle-DP only uses one-way communications, it naturally defends against a semi-honest adversary that may compromise any number of users. In the robust $(\varepsilon,\delta,\gamma)$ shuffle-DP model \cite{balcer2021connecting}, a fraction of $1-\gamma$ of the users are allowed to be malicious, i.e., they may not follow the specified protocol.  
	Our protocol can be easily made robust, since each users injects (expected) $\rho$ blanket noises independently.  By setting $\rho=\frac{32 \ln(2/\delta)}{\gamma\varepsilon^2} \cdot \frac{b}{n}$, we can still ensure that sufficient blanket noises are added.  Nevertheless, it is not possible to guarantee accuracy if users deviate from the protocol, as the malicious users may replace their data by fake values. The same situation happens for the MPC model with malicious parties, which can only guarantee privacy but not utility.  
	
\section{Heavy Hitter Detection}\label{sec:hhd}

A standard technique to reduce the heavy hitter detection problem to frequency estimation is to use a prefix tree \cite{cormode10frequent, erlingsson2019amplification} of $\log B$ levels over the domain $[B]$.  For any $x\in [B]$, we use $x[:i]$ to denote the first $i$ bits in the binary representation of $x$. For level $i$ of the prefix tree, we map every $x\in [B]$ to $x[:i] \in [2^i]$, and construct a frequency estimation data structure over the domain $[2^i]$.  It is clear that $x[:i]$ is a heavy hitter in level $i$ only if $x[:(i-1)]$ is a heavy hitter in level $i-1$.  To identify all the heavy hitters, we start from level $1$, which has only 2 elements in its domain, query for their frequencies, and then proceed to lower levels, only querying for the frequencies of elements whose prefixes are heavy hitters.  Adopting this idea directly for shuffle-DP, as done in \cite{badih2021on}, leads to an $O(\log^3 B)$-factor increase in the message number, because the privacy budget $\varepsilon$ has to be divided among the $\log B$ levels. 

We improve their heavy hitter detection algorithm using the following two ideas.  First, instead of  participating in all levels of the prefix tree, we ask each user to randomly pick one level.  This reduces the message number to $O(1)$.  A similar idea was used in the local-DP protocol \cite{bassily2017practically}, but our shuffle-DP protocol only uses private randomness where \cite{bassily2017practically} needs public randomness.  Next, we sample the messages to further reduce the message number to $o(1)$.  

Below we present our heavy hitter detection protocol $\PP^\HHD=(\RR^\HHD,\AAA^\HHD)$ for the case $B>n/\log n$. If $B\le n/\log n$, we can simply use the small-domain frequency estimation protocol without the need of heavy hitter detection.  We also assume $B$ and $n/\log n$ are both powers of $2$.
	
	
\subsection{Local randomizer}
Our heavy hitter detection randomizer $\RR^\HHD$  (Algorithm~\ref{algo:lr-hhd}) works as follows. Let $s=\log(n /\log n)$ be the first level, $t=\log B$ be the last level, and $r=t-s+1$ be the total number of levels. Suppose a user has element $x$. It first uniformly randomly chooses a level $i\in\{s,\dots,t\}$, and then inputs $x[:i]$ to the local randomizer of our frequency estimation protocol (with carefully chosen parameters, see Algorithm~\ref{algo:lr-hhd}). Then it adds the label $i$ to each message and sends it out with probability $p$, whose value will be determined by the analysis.
	
	\begin{algorithm} 
		\caption{Local Randomizer $\RR^\HHD$}\label{algo:lr-hhd}
		\KwPublic{$B, n, p, \varepsilon, \delta$}
		\KwIn{$x \in [B]$}
		\KwOut{A multiset $\TT \subseteq  \{s,\dots,t\}\times \{1,\dots,q-1\} \times [q] \times [b]$}
		$q\gets$ the smallest prime that is greater or equal to $B$\;
		$b\gets n/\log^2 n$\;
		$\TT \gets \varnothing$\;
		$i\gets$ Uniformly chosen in $\{s,\dots,t\}$\;
		\mbox{$\RR_i\gets\RR^{\FE1}$ with parameters $2^i,b,n/2r,\varepsilon, \delta/2$}\;
		$\TT_i \gets \RR_i(x[:i])$\; 
		\For{$(u_j, v_x, w_j) \in \TT_i$}{$y_j \gets \Ber(p)$ \; \If{$y_j = 1$}{$\TT \gets \TT \uplus \{(i, u_j, v_x, w_j)\}$\;}}
		\KwRet{$\TT$}\;
	\end{algorithm}
	
	\begin{lemma}[Privacy of $\RR^{\HHD}$]
		\label{lem:prihhd}
		$\RR^{\HHD}$ satisfies $(\varepsilon, \delta)$-shuffle DP if $n\ge 8r\ln\frac{2r}{\delta}$.
	\end{lemma}
	
	\begin{proof}
	    For any $i\in\{s,\dots,t\}$, let $n_i$ be the number of users that choose this level, then $n_i\sim\Bin(n,1/r)$. By Chernoff bound,
		$\Pr[n_i< n/2r]\le\exp(-n/8r)$. By union bound, $\Pr[\lor_{i=s}^t(n_i< n/2r)]\le r\exp(-n/8r)\le \delta/2$, i.e., the number of users on every level is at least $n/2r$ except with probability at most $\delta/2$. 
		Conditioned upon this happening, the messages at each level satisfies $(\varepsilon, \delta/2)$-shuffle DP by Theorem~\ref{thm:bibdp}.  The conditionality turns it into $(\varepsilon, \delta)$-DP.  Finally, since no user participates in more than one level, the privacy of all levels follows from parallel composition of DP.
	\end{proof}
	
	\subsection{Analyzer}
	Now we introduce the analyzer $\AAA^{\HHD}$ (Algorithm \ref{algo:a-hhd}). First it classifies the messages to the correct levels according to their labels. Then starting from $C_s=\{0,1\}^s$, it recovers the heavy hitter candidates bit by bit.
	For each $i=s,\dots,t$, it counts the occurrences of all candidates using Algorithm~\ref{algo:c-hhd}, which is the same as the analyzer of our frequency estimation protocol but without bias removal.  All candidates that appear less than $\Delta$ times are removed, and the remaining candidates are extended by one bit to obtain $C_{i+1}$ except in the last level. The final set of candidates  is denoted as $H$.
	
	\begin{algorithm}
		\caption{Counter $\CC^{\HHD}$}\label{algo:c-hhd}
		\KwPublic{$B, b$}
		\KwIn{An element $x$; A multiset $\TT\subseteq\{1,\dots,q-1\}\times[q]\times[b]$}
		\KwOut{Occurrences $X$ of $x$}
		$q\gets$ the smallest prime that is greater or equal to $B$\;
		$X \gets 0$\;
		\For{$(u,v,w) \in \TT$}{ \If{$h_{u,v}(x) = w$}{$X \gets X + 1$\;}}
		\KwRet{$X$}\;
	\end{algorithm}
	
	\begin{algorithm} 
		\caption{Analyzer $\AAA^\HHD$}\label{algo:a-hhd}
		\SetKwComment{Comment}{$\triangleright$\ }{}
		\KwPublic{$B, n, p$}
		\KwIn{A multiset $\TT \subseteq \{s,\dots,t\} \times \{1,\dots,q-1\} \times [q] \times [b]$}
		\KwOut{The set of heavy hitters candidates  $H \subseteq [B]$}
		$q\gets$ the smallest prime that is greater or equal to $B$\;
		$\Delta \gets p \phi n/2r$\;
		\For{$i \gets s$ \KwTo $t$}{$\TT_i \gets \{ (u, v, w)\ |\ (i, u, v, w) \in \TT\}$\;}
		$C_s \gets\{0,1\}^s$\;
		\For{$i \gets$ $s$ \KwTo $t-1$}{
		Set the parameters of $\CC^\HHD$ to be $2^i,n/\log^2 n$\;
		$C_i' \gets \{x\in C_i\mid \CC^\HHD(x,\TT_i)\ge\Delta\}$\;
		$C_{i+1} \gets \{\overline{x0}, \overline{x1} \mid x \in C_{i}'\}$\;
		}
		Set the parameters of $\CC^\HHD$ to be $2^t,n/\log^2 n$\;
		$H\gets\{x\in C_t\mid \CC^\HHD(x,\TT_t)\ge\Delta\}$\;
		\KwRet{$H$}\;
	\end{algorithm}
	
	\begin{lemma}[Completeness of $\PP^\HHD$]
		\label{lem:hhdacc}
		For any $\beta\in(0,1)$, if $p \ge \frac{8r}{\phi n}\ln\frac{r}{\phi\beta}$, then $H$ contains all the heavy hitters with probability at least $1-\beta$.
	\end{lemma}
	\begin{proof}
	We use $g(x[:i])$ to denote the frequency of $x[:i]$ in the $i$-th level.  Then for any heavy hitter $x$, $g(x[:i])\ge \phi n$.  Let $X_i$ be the counter for $x$ returned by $\CC^\HHD$.  We bound the probability $\Pr[X_i<\Delta]$.  Since blanket noises and hash collisions can only make $X_i$ larger, it suffices to bound $\Pr[\Bin(\phi n, p/r) < \Delta]$. By Chernoff bound, this is at most  $\exp(-p\phi n/8r) \le \phi\beta/r $.
    Then the probability that $x$ is removed from the set of candidates follows from a union bound:
	\[
		\Pr[x\notin H]=\Pr \left[\bigvee_{i=s}^{t} (X_i < \Delta) \right]\le \sum_{i=s}^{t}\Pr [X_i < \Delta] \le \phi\beta.
	\]
	Finally, applying another union bound over all the at most $1/\phi$ heavy hitters yields the lemma.
	\end{proof}
	
	\begin{lemma}[Efficiency of $\PP^\HHD$] \label{lem:hhdeff}
	Assume $\log n =\Theta(\log B)= \Theta(\log(1/\delta))$ and $\varepsilon$ is a constant. 
	If $\phi=\Omega(\log^2 n/n)$ and $p \ge \frac{30r}{\phi n}\ln (\phi B)$, then
	\begin{enumerate}
		\item $\E[|C_i|]=O(1/\phi)$ for each $i=s+1,\dots,t$;
		\item the expected number of messages per user is $O(p)$;
		\item the expected running time of $\AAA^\HHD$ is $O(pn/\phi)$.
	\end{enumerate}
	\end{lemma}
	
	\begin{proof}
		First we prove (1). We bound $Y$, the number of elements whose occurrences exceed  $\Delta$. Let $n_i$ be the number of users that have chosen level $i$. Define $n_0\coloneqq 2n/r$. By Chernoff bound, $\Pr[n_i>n_0]\le\exp(-n/3r)=O(1/Bn)$.  Next we assume $n_i\le n_0$ for all $i$. Note that more users chosen in level $i$ would increase $Y$, so we only need to consider the case $n_i=n_0$. We fix any combination of $n_0$ users and prove the result. For any $x\in\{0,1\}^i$, denote $g_i(x)$ as the true frequency of $x$ in the $n_0$ users and $X$ as the occurrences estimated by Algorithm~\ref{algo:c-hhd}. Then
		\begin{align*}
		    X&\sim\Bin(g_i(x),p)+\Bin\left(n_0-g_i(x),p\pcol\right)\\
		    &+\Bin\left(n_0\lfloor \rho \rfloor,p/b\right)+\Bin\left(n_0,p(\rho-\lfloor \rho \rfloor)/b\right),
		\end{align*}
		where $b=n/\log^2 n$ and
		\[\rho=\frac{32 \ln(4/\delta)}{\varepsilon^2} \cdot \frac{b}{n/2r}=O(1).\]
		Recall that $\pcol<1/b$ is the probability of hash collision. 
		Therefore, $\E[X]\le p(g_i(x)+n_0\pcol+n_0\rho/b)= p g_i(x)+O(pn_0/b).$ 
		Assume $g_i(x)<\phi n_0/8$. Since $\phi=\Omega(\log^2 n/n)=\Omega(1/b)$, there exists large enough $n$ such that $O(pn_0/b)<p\phi n_0/24$. Therefore, $\E[X]\le p\phi n_0/8+p\phi n_0/24=p\phi n_0/6=2\Delta/3$. Let $c=\Delta/\E[X]-1\ge 1/2$. By Chernoff bound,
		\begin{align*}
			\Pr[X \geq \Delta]&=\Pr[X \geq (1+c)\E[X]]\le \exp \left(-\E[X] \cdot\frac{c^2}{2+c}\right)\\
			&=\exp \left(-\frac{\Delta c^2}{(2+c)(1+c)}\right)\le \exp\left( -\frac{\Delta}{15}\right) \le \frac{1}{\phi B}.
		\end{align*}
		Note that the number of elements $x$ with $g_i(x)\ge \phi n_0/8$ is at most $8/\phi$, and $|C_i|\le B$. Therefore, we have 
		\begin{align*}
		    \E[|C_i|]&\le\Pr[n_i>n_0]\cdot \E[|C_i|\mid n_i>n_0]+\E[|C_i|\mid n_i\le n_0]\\
		    &\le O(1/ Bn)\cdot B + (1/\phi B)\cdot B+8/\phi= O(1/\phi)
		\end{align*}
		
		Next we prove (2). 
		Each user chooses exactly one level, and in this level, each user is expected to send $p(1+\rho)$ messages where $\rho=O(1)$, so the total number of messages each user sends is
		$O(p).$
		
		Finally we prove (3). The analyzer runs in three steps:
		\begin{enumerate}
		    \item Classify the messages to the corresponding levels. The cost is linear to the number of total messages, which is $O(pn)$.
		    \item Estimate the occurrences of all elements in $[2^s]$. By Theorem~\ref{the:fe1}, the cost is \[O(p\cdot 2^s(\E[n_s]/b+\log(1/\delta)/\varepsilon^2))=O(p n).\]
		    \item Filter the set of candidates and extend it to the next level. For the $i$-th level, note that 
    		\begin{align*}
    		    \E[|C_i|\cdot n_i]&\le\Pr[n_i>n_0]\cdot Bn+n_0\cdot \E[|C_i|\mid n_i\le n_0]\\
    		    &\le O(n_0/\phi).
    		\end{align*}
		    Therefore, the expected cost is
		    \[O\left(\sum_{i=s}^t p\E[|C_i|\cdot n_i]\right)=O(pn/\phi).\]
		\end{enumerate} 
        Summing up all costs, the expected total running time is $O(pn/\phi)$.
	\end{proof}
	
	\begin{theorem}\label{thm:hhd}
		Assume $\log n =\Theta(\log B)= \Theta(\log(1/\delta)) = \Theta(\log(1/\beta))$, $\phi=\Omega(\log^2 n/n)$, and $\varepsilon$ is a constant. $\PP^\HHD$ a private-coin $(\varepsilon, \delta)$-shuffle DP protocol, in which each user sends $O(\log^2 n/\phi n)$ messages in expectation, each consisting of $O(\log B)$ bits. All the heavy hitters are reported in $O(\log^2 n/\phi^2)$ time with probability at least $1-\beta$.  
	\end{theorem}
	
	\begin{proof}
	We set 
	\[p=\frac{30r}{\phi n}\ln\left(\frac{r}{\phi\beta}+ \phi B\right)=\Theta\left(\frac{\log^2 n}{\phi n}\right),\]
	which is less than $1$ for  $\phi=\Omega(\log^2 n/n)$. Then $p$ satisfies both the requirements of Lemma~\ref{lem:hhdacc} and \ref{lem:hhdeff}, and we conclude this theorem by Lemma~\ref{lem:prihhd}--\ref{lem:hhdeff}.
	\end{proof}
	
By also running our frequency estimation protocol (using half of the privacy budget, say), we can estimate the frequencies of all elements with $1+o(1)$ messages per user in total time $O(n^2)$, for any $B=n^{O(1)}$:
	\begin{corollary}
		\label{col:hhdforfe}
        For any $B=n^{O(1)}$ and any constant $c>2$, there is a $1+o(1)$-message $(\varepsilon,\delta)$-shuffle DP protocol that estimate the frequency of all elements with $O(\log^c n)$ error in time $O(n^2/\log^c n)$.
	\end{corollary}
	
	\begin{proof}
	    Take $\phi=\log^c n/n$ in $\PP^\HHD$ and $b=n/\log^3 n$ in $\PP^{\FE1}$.
	        Then the number of messages per user is $1+O(1/\log^2 n)+O(\log^2n/\phi n)=1+o(1)$.  The frequencies of all the heavy hitters are given by $\AAA^{\FE1}$ with error $O(\log^2 n)$.
	        The frequencies of all the light hitters are estimated as 0, so the error is $O(\phi n)=O(\log^c n)$. The running time of $\AAA^{\HHD}$ and $\AAA^{\FE1}$ are $O(\log^2 n/\phi^2)$ and  $O(n/\phi)$ respectively, so the total query time is $O(n^2/\log^c n)$.
	\end{proof}
	
\section{1-Sparse Vector Summation}

\subsection{Review of \texorpdfstring{\citet{ghazi21aggregation}}{Ghazi et al.}}
\label{sec:review}
Recall that in the real summation problem, the $i$-th user holds a real number $w_i\in [0, 1]$, and the goal is to estimate their sum.  Let $\Delta\ge 1$ be a quantization parameter. The shuffle-DP protocol of \cite{ghazi21aggregation} sends the following messages:
\begin{enumerate}
    \item sends $\lfloor \Delta w_i \rfloor+\Ber(\Delta w_i-\lfloor \Delta w_i \rfloor)$ if this value is nonzero;
    \item sends the central-DP noise, which is nonzero with probability $\softO(\Delta/n)$ (they show that the sum of these noises almost matches the discrete Laplace distribution); and
    \item sends a random multiset $S$ of zero-sum noises to ensure shuffle-DP, where each $s\in S$ is taken from  $\{-\Delta,\dots,\Delta\} - \{0\}$ and $\E[|S|]=\softO(\Delta/n)$.
\end{enumerate}
The analyzer simply adds up all messages received and scales it down by $\Delta$. The estimator is unbiased with variance $O(1+\sum_i w_i/\Delta^2)$, where $O(\sum_i w_i/\Delta^2)$ comes from (1) and $O(1)$ comes from (2).

For the 1-sparse vector summation problem, the $i$-th user holds an element $x_i\in[B]$ together with a weight $w_i \in [0, 1]$, and the goal is to obtain an estimate $\hat{g}_x$ for $g_x = \sum_{i=1}^n w_i\cdot \mathbb{I}[x_i = x]$ for every $x \in [B]$ so as to minimize the MSE $=\sum_x (\hat{g}_x - g_x)^2$.
\citet{ghazi21aggregation} simply apply their real summation protocol on each coordinate $x\in[B]$, each with $(\varepsilon/2, \delta/2)$ privacy budget, resulting in a message number of $1+\softO(\Delta B/ n)$ with MSE $O(B+\sum_i w_i/\Delta^2)=O(B+n/\Delta^2)$. 

\citet{ghazi21aggregation} set $\Delta=\sqrt{n}$ so the variance on each coordinate is $O(1)$.  However, we observe that setting $\Delta=\sqrt{n/B}$ does not increase the MSE asymptotically (it is still $O(B)$), while the number of messages can be reduced to $1+\softO(\sqrt{ B/ n})$, which is $1+o(1)$ when $B=o(n)$. Thus, they can achieve the optimal central-DP MSE with $1+o(1)$ messages for the low-dimensional setting $B=o(n)$.

In the high-dimensional setting $B=\Omega(n)$, the optimal central-DP MSE is $\softO(n)$, by simply running the Laplace mechanism followed by setting all estimates less than $\log B / \varepsilon$ to $0$.  We note that this ``zeroing'' post-processing step can also be applied to \cite{ghazi21aggregation}, so that their MSE can be reduced to $\softO(n)$ when $B>n$.  However, their message number is at least $\softOmega( B/ n)$ even if we set $\Delta=1$.

\subsection{Our protocol}
\label{sec:vector}
Below, we show how to achieve $\softO(n)$ MSE with $1+o(1)$ messages in the high-dimensional setting, based on our frequency estimation and heavy hitter detection protocols.

User $i$ with input $(x_i,w_i)$ applies randomized rounding to $w_i$. More precisely, we extend the domain to $[B+1]$ with a dummy element $B$, and then convert $x_i$ to $f(x_i, w_i)$ where
\[
    f(x, w) = \begin{cases}
    x, & \text{with probability }w; \\
    B, & \text{otherwise}.
    \end{cases}
\]
Then we just apply the protocol from Corollary~\ref{col:hhdforfe}, namely, heavy hitter detection $+$ frequency estimation.  

\begin{theorem} \label{thm:vector_sum}
When $B=\Omega(n)$, there is a $(\varepsilon, \delta)$-shuffle DP protocol for the 1-sparse vector summation problem that sends $1+o(1)$ messages while achieving MSE $O(n\log^c n)$ for any $c>2$. The analyzer's running time is $O(n^2/\log^c n)$.
\end{theorem}

\begin{proof}
    Since for any two neighboring inputs, after applying $f$, the outputs are also neighboring if not identical, so the privacy of the protocol still holds. The message number and running time directly follow Corollary~\ref{col:hhdforfe}.  Below we analyze the MSE. 
    
    Let $H$ be the set of heavy hitter candidates, then given $x\in H$, the frequency of $x$ is estimated by our frequency estimation protocol. By the proof of Lemma~\ref{lem:accfe1}, our estimator is \[\hat{g}_x=(X-n\rho/b-n\pcol)/(1-\pcol),\]
    where 
    \[X=\left(\sum_{i=1}^n X_i\right)+\Bin(n\lfloor\rho\rfloor,1/b)+\Bin(n,(\rho-\lfloor\rho\rfloor)/b),\]
    $b=n/\log^2 n$,
    $X_i=z_{xi}+(1-z_{xi})\Ber(\pcol)$,
    and $z_{xi}=\mathbb{I}[f(x_i,w_i)=x]$. 
    Since $X_i^2=X_i$, we have $\Var[X_i]=\E[X_i]-\E[X_i^2]\le \E[X_i]$, while binomial distributions have the similar inequality. 
    Since $\E[\hat{g}_x]=g_x$, and $X_i,X_j$ are independent for all $i\neq j$, we have 
    \begin{align*}
        \E[(\hat{g}_x-g_x)^2\mid x\in H]&=\Var(\hat{g}_x\mid x\in H)\\
        &\le\E[X]/(1-\pcol)^2\\
        &\le (g_x+(n-g_x)\pcol+n\rho/b)/(1-\pcol)^2\\
        &=O(g_x+\log^2 n).
    \end{align*}
    
    Let $\phi$ be the threshold chosen in Corollary~\ref{col:hhdforfe}, then $\phi n=\log^c n=\omega(\log^2 n)$.
    We set $\beta=1/n$ in Lemma~\ref{lem:hhdacc}, then with probability at least $1-1/n$, all $x$ with $\bar{g}_x\ge\phi n$ are in the candidate set, where $\bar{g}_x=\sum_{i=1}^n z_{xi}$. On the other hand, by Chernoff bound, $\Pr[\bar{g}_x<g_x/2]=\exp(-\Omega(g_x))$. When $g_x\ge 2\phi n=\omega(\log^2 n)$, it is $\Pr[\bar{g}_x<\phi n]=\exp(-\omega(\log^2 n))=O(1/n)$, which implies $\Pr[x\notin H]\le \Pr[\bar{g}_x<\phi n]+\Pr[x\notin H\mid \bar{g}_x\ge \phi n]=O(1/n)$. For any $x\notin H$, our estimator is simply $\hat{g}_x=0$. Therefore,
    \begin{align*}
        \MSE&=\sum_{x\in[B]} \E[(\hat{g}_x-g_x)^2]\\
        &\le\sum_{x:g_x\ge2\phi n}(\Var(\hat{g}_x\mid x\in H]+g_x^2\cdot\Pr[x\notin H])+\sum_{x:g_x<2\phi n}g_x^2\\
        &=O\left(\sum_{x\in[B]}g(x)+\frac{\log^2 n}{\phi}+\frac{1}{n}\sum_{x\in[B]}g_x^2+\sum_{x:g_x\le 2\phi n}g_x^2\right)\\
        &=O\left(n+n+n+2\phi n\sum_{x\in[B]}g_x\right)\\
        &=O(n\log^c n).\qedhere
    \end{align*}
\end{proof}

\paragraph{Remark 1}\label{remark1} Instead of using heavy hitter detection $+$ frequency estimation, we could also use the frequency estimation protocol alone (followed by the zeroing post-processing step), which can improve the MSE by a polylogarithmic factor but at the expense of a running time linear in $B$. 

\paragraph{Remark 2} Our protocol can be easily extended to the case where each user has a $k$-sparse vector.  For \textit{coordinate-level} DP (i.e., neighboring datasets differ by one coordinate in some user's vector), we can treat each of the $k$ coordinates as a 1-sparse vector, which increases the message number by a factor of $k$; for \textit{vector-level} DP (i.e., neighboring datasets may differ by the whole vector owned by some user), we can use group privacy \cite{vadhansurvey} to scale $(\varepsilon, \delta)$ down to $(\varepsilon/k, \delta/ke^{\varepsilon})$ to reduce this case to coordinate-level DP.

\section{Experiments}

We have implemented our protocols and the protocols in \cite{badih2021on,ghazi21aggregation} in C++\footnote{\url{https://github.com/hkustDB/SDPFE}}, and  conducted all the experiments on a server with an Intel Xeon Silver 4116 CPU.  Note that, however, the accuracy and message number of the protocols do not depend on the machine, only the running time of the analyzer does.  The randomizer's time is negligible. 

We used both real and synthetic data in the experiments:

\begin{itemize}
    \item The AOL dataset \cite{pass2006picture} is a collection of real-world website accesses.  We truncate each URL to a prefix of a certain length, so as to obtain different domain sizes.  For example, truncating to the first 3 characters generates a domain of size $B=2^{24}$, since each character has 8 bits. 
    \item We used the Zipfian distribution to generate synthetic data of varying levels of skewness.  We first randomly permute the elements in $[B]$.  Then we draw elements following the distribution Zipf($\alpha, B$), whose probability mass for the $k$-th element in the permutation is $k^{-\alpha} / \sum_{i=1}^B i^{-\alpha}$, so a larger $\alpha$ corresponds to more skew.    
\end{itemize}

\subsection{Frequency Estimation}

\begin{table*}
	\centering
	\begin{tabular}{|c||c|c|c|c|c|c||c|c|c|c|c|c|}
		\hline
		& \multicolumn{6}{|c||}{95\% error} & \multicolumn{6}{|c|}{\#Messages / user}  \\
		\hline
		$\varepsilon$ & 0.25 & 0.5 & 0.75 & 1 & 2 & 3 & 0.25 & 0.5 & 0.75 & 1 & 2 & 3  \\
		\hline \hline
		GGKPV & - & 98.68 & 65.67 & 49.24 & - & - & - & 3317 & 1475 & 830 & - & - \\
		Ours ($c=1$) & 72.99 & 38.94 & 25.83 & 22.47 & 13.68 & 12.88 & 114.82 & 30.65 & 15.010& 9.51  &4.23& 3.29 \\
		Ours ($c=3$) &1395.4 & 411.21 & 228.28 & 163.94 & 102.20 & 92.24 & 1.85 & 1.22 & 1.10& 1.063& 1.024& 1.017 \\
		\hline \hline
		& \multicolumn{6}{|c||}{Query time (all / s)} & \multicolumn{6}{|c|}{Query time (single /s)}  \\
		\hline
		$\varepsilon$ & 0.25 & 0.5 & 0.75 & 1 & 2 & 3 & 0.25 & 0.5 & 0.75 & 1 & 2 & 3  \\
		\hline \hline
		GGKPV & $> 1$ day & 99531.8 & 42581.2 & 23908.6 & - & - & - & 48.44 & 16.45 & 13.11 & - & -\\
		Ours ($c=1$) & 956.32 & 231.51 & 118.27 & 70.91 & 31.69 & 23.91 & 0.359 &  0.095&  0.047 &  0.028 &  0.013  & 0.009\\
		Ours ($c=3$) &1106.27 & 1140.04 & 1182.6 & 1047.19 & 1040.12 & 1151.4 & 0.003 &  0.003 &  0.003&  0.003&  0.003 &  0.003 \\
		\hline
	\end{tabular}
	\caption{Comparison among frequency estimation protocols varying privacy parameter $\varepsilon$ ($n=10^5, B=2^{24}, \delta=10^{-10}, b=10^5/16.6^c$)}
	\label{tab:diff_epsilon}
\end{table*}

\begin{table*}
    \centering
    \begin{tabular}{|c||c|c|c|c||c|c|c|c|}
        \hline
        & \multicolumn{4}{|c||}{95\% error} & \multicolumn{4}{|c|}{\#Messages / user}  \\
        \hline
        $n$ & $10^4$ & $10^5$ & $10^6$ & $10^7$ & $10^4$ & $10^5$ & $10^6$ & $10^7$ \\
        \hline \hline
        GGKPV & 39.12 &  49.24 &  60.14 &  - &  665 &  830 &  996 &  - \\
        Ours ($c=1$) & 18.40 & 22.47 & 27.32 & 29.64 & 9.265 & 9.51 & 9.65 & 9.75 \\
        Ours ($c=3$) & 132.55 & 163.94 & 204.57 & 247.23 & 1.09 & 1.064 & 1.045 & 1.034 \\
        \hline \hline
        & \multicolumn{4}{|c||}{Query time (all / s)} & \multicolumn{4}{|c|}{Query time (single / s)}  \\
        \hline
        $n$ & $10^4$ & $10^5$ & $10^6$ & $10^7$ & $10^4$ & $10^5$ & $10^6$ & $10^7$ \\
        \hline \hline
        GGKPV & 11702.9 & 23908.6 & 53752.6 & $> 1$ day & 0.55 & 16.43 & 225.41 & - \\
        Ours ($c=1$) & 57.1 & 70.91 & 98.22 & 139.47 & 0.0028 &  0.0282 &  0.2836 &  2.9131 \\
        Ours ($c=3$) & 545.43 & 1047.19 & 1839.29 & 2859.06 & 0.0003 &  0.0031 &  0.0312 &  0.3119 \\
        \hline
    \end{tabular}
    \caption{Comparison among frequency estimation protocols varying number of users $n$ ($B=2^{24}, \varepsilon = 1, \delta = n^{-2}, b=n/\log^c n$)}
    \label{tab:diff_n}
\end{table*}

We evaluate our large-domain protocol against the start-of-the-art method \cite{badih2021on}, denoted as GGKPV. We compare them along the following metrics: (1) error, including the maximum error, the 95\% error, the 90\% error, and the median error over all elements in $[B]$; (2) the expected number of messages sent per user; (3) query time to estimate the frequency of a single element; and (4) the query time to estimate all frequencies.

The results on the AOL dataset are shown in Figure \ref{fig:msg_error}.  Note that the error is in linear scale while the message number is shown in log scale.  Recall that the parameter $c$ controls the error-message number trade-off of our protocol.  In particular, $c=1$ has $O(\log n)$ error (same as that of GGKPV) with $O(1)$ messages (GGKPV has $O(\log n)$ message).  The experimental results suggest that our actual error is 
smaller than half that of GGKPV, suggesting that the hidden constant in the $O(\log n)$ error in our protocol is smaller.  The message number is $1/100$ that of GGKPV, which is also more than what the theory predicts.  Part of the reason is the parameter optimization (see the remark after Theorem \ref{thm:bibdp}), which reduces the message number by $70\%$--$90\%$. Choosing a larger $c$ tunes the trade-off: With $c=3$, each user sends $1+o(1)$ messages, albeit at the expense of an $O(\log^2 n)$ error.  

\begin{figure}
	\begin{center}
		\includegraphics[width=0.92\linewidth]{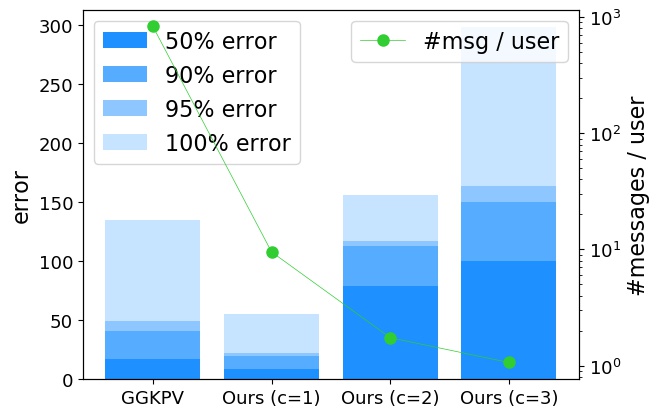}
		\caption{Comparison among frequency estimation protocols on error and \#messages/user ($n=10^5, B=2^{24}, \varepsilon=1, \delta=n^{-2}$)}
		\label{fig:msg_error}
	\end{center}
    \vspace{-0.5em}
\end{figure}

Table \ref{tab:diff_epsilon} shows the results on different $\varepsilon$. GGKPV only supports $\varepsilon \leq 1$, while for very small $\varepsilon$, it cannot finish within 1 day.
From the results, we see that, a larger $\varepsilon$ leads to smaller error, less messages, and faster query time, which are as expected.  The more than 100x reduction of query time is also very significant and useful in practice.
Table \ref{tab:diff_n} shows results for different $n$, the number of users. As the theory predicts, message number, the error, and the query time of all elements of our protocol are logarithmically affected by $n$, while the query time of a single element is linear to $n$.

\subsection{Heavy Hitter Detection}

\begin{table*}
    \centering
    \begin{tabular}{|c||c|c|c|c|c|c|}
        \hline
        Protocols & \multicolumn{4}{|c|}{Our HHD} & GGKPV-HHD & Our FE (c=3) \\
        \hline
        Threshold $\phi$ & 0.01 & 0.0075 & 0.005 & 0.0025 & - & - \\
        \hline \hline
        \#Messages / user & 0.21 & 0.28 & 0.40 & 0.77 & $> 10^8$ & 1.034 \\
        Message size / user (bytes) & 3.57 & 4.95 & 7.08 & 13.85 & $>20$ GB & 18.61 \\
        Query time (s) & 41.29 & 190.45 & 612.85 & 4802.28 & $> 1$ year & $> 1$ year\\ 
        Recall rate  & 100\% & 100\% & 100\% & 100\% & 100\% & 100\% \\
        \hline
    \end{tabular}
    \caption{Comparison among heavy hitter detection protocols ($n=10^7, B=2^{48}, \varepsilon=1, \delta = 10^{-14}$)}
    \label{tab:comparehhd}
\end{table*}

Table \ref{tab:comparehhd} compares different protocols for heavy hitter detection.  We use a large domain $B=2^{48}$. For a smaller $B$, we could just use our frequency estimation protocol, but for such a large $B$, it would take more than 1 year (estimated) for it to obtain all frequencies.  Although the message number of GGKPV-HHD \cite{badih2021on} is polylogarithmic, it is gigantic number in practice.  On the other hand, our HHD protocol provides a highly-efficient and light-weight solution to this problem.  The recall rate is 100\% in all our experiments, namely, all heavy hitters have been identified.  There are false positives, though, but they can be removed by querying the frequency estimation data structure, subject to the error in the estimated frequencies.  The costs of our HHD also depend on $\phi$, where a larger $\phi$ reduces all costs, as predicted by the theoretical analysis.

\subsection{Sparse Vector Summation}
\begin{table*}
	\centering
	\begin{tabular}{|c|c||c|c|c|}
		\hline
		\multicolumn{2}{|c||}{Mechanisms} & \# Messages / user & MSE ($ \times 10^8$ ) & Query time (s) \\
		 \hline
		\multirow{2}{*}{GKMPS} & DLap & \multirow{2}{*}{16788.66} & 134.11 & \multirow{2}{*}{1600}\\ 
		 & Zeroing & & 0.65 & \\		\hline
		\multicolumn{2}{|c||}{HHD+FE} & 1.15 & 4.30 & 548\\
		\hline
		\multirow{3}{*}{FE} & $c=1$ & 9.75 & 0.09 & 101\\ 
		 & $c=2$ & 1.54 & 0.67 & 109 \\ 
		 & $c=3$ & 1.03 & 1.54 & 1349 \\ 
		\hline
	\end{tabular}
	\caption{Comparison among  vector summation protocols ($n=10^7, B=2^{24}, \varepsilon=1, \delta = 10^{-14}, \phi=7\times 10^{-4}$)}
	\label{tab:vscomp}
\end{table*}

\begin{table*}
	\centering
	\begin{tabular}{|c||c|c||c|c|c||c|c|}
		\hline
		Metrics & \multicolumn{2}{|c||}{\# Messages / user} & \multicolumn{3}{|c||}{MSE ($ \times 10^8$ )} & \multicolumn{2}{|c|}{Query time (s)}\\
		\hline
		\multirow{2}{*}{Mechanisms} & \multirow{2}{*}{HHD+FE} &\multirow{2}{*}{GKMPS} & \multirow{2}{*}{HHD+FE} &  \multicolumn{2}{|c||}{GKMPS} & \multirow{2}{*}{HHD+FE} &\multirow{2}{*}{GKMPS}\\
		\cline{5-6}
		& & & & DLap & Zeroing & & \\
		\hline \hline
		AOL & \multirow{4}{*}{1.14} & \multirow{4}{*}{$>10^6$} &  18.01 & \multirow{4}{*}{8590}  &  3.69 & \multirow{4}{*}{1724} & \multirow{4}{*}{> 1 day}\\
		Zipf ($\alpha = 1$) &  & & 5.50 & & 1.07 & &\\
		Zipf ($\alpha = 2$) & & & 0.42 & &  0.06 & &\\
		Zipf ($\alpha = 3$) & &  & 0.08 & &  0.01 & &\\
		\hline
	\end{tabular}
	\caption{Comparison among vector summation protocols varying datasets ($n=10^7, B=2^{30}, \varepsilon=1, \delta = 10^{-14}, \phi=7\times 10^{-4}$)}
	\label{tab:ldcomp}
\end{table*}
To conduct experiments on the 1-sparse vector summation problem, we added a weight to each user's element.  For the AOL dataset, each user reports its most visited URL and the weight is the number of times visited normalized to $(0,1)$. For Zipfian data, we added uniform random numbers in $(0, 1)$ as weights.

As described in Section \ref{sec:vector}, we can solve the 1-sparse vector summation problem by combining heavy hitter detection and frequency estimation (denoted as HHD+FE) for very large $B$, or by frequency estimation alone for $n<B<\softO(n^2)$ (denoted FE; see \nameref{remark1}).  We compare them with the protocol in \cite{ghazi21aggregation} (denoted as GKMPS), which simply executes $B$ instances of real summation with $\Delta = 1$.  Each real summation generates a discrete Laplace noise with scale $2/(1-\gamma)\varepsilon$, where $\gamma$ is a parameter that controls the trade-off between error and message number. We set $\gamma = 0.9$ following \cite{ghazi21aggregation}.  Directly doing so results in $\softO({B})$ MSE (denoted as ``DLap''), while applying zeroing post-processing step as described in Section \ref{sec:review},
which reduces the MSE to $\softO(n)$. Our experiments demonstrate that this zeroing post-processing step significantly reduces their MSE in practice.  We do the same zeroing post-processing step for HHD+FE and FE as well.

We first compare all the protocols on the AOL dataset with $B$ slightly larger than $n$, and
the results are shown in Table~\ref{tab:vscomp}.
Although theoretically GKMPS sends $1+\softO(B/n)$ messages, this number in the experiments is more than $10,000$, indicating an impractical hidden polylogarithmic factor. At the same time, our protocols send much fewer messages while achieving a similar or smaller error. 
By taking $c=3$, both HHD+FE and FE send $1+o(1)$ messages while achieving a near-optimal MSE.  Setting a smaller $c$ reduces the MSE to be smaller than that of GKMPS while the message number increases to a small constant.  Comparing our two protocols,  FE outperforms HHD+FE, because the later assigns half of the privacy budget to detect the heavy coordinates. 

Next, we tested with a $B\gg n$ on the AOL dataset and the Zipfian datasets and the results are shown in Table~\ref{tab:ldcomp}.  For such a large $B$, neither FE nor GKMPS can finish in a day.  
The MSE of GKMPS is computed by assuming no error over all the $0$ coordinates, which yields a (small) underestimate of their actual MSE.
The results on the Zipfian datasets show skewness affects the performance.  We observe that the number of messages and the query time of HHD+FE do not really change, but the MSE of both HHD+FE and GKMPS (with zeroing) decreases on more skewed data. This is because as skewness increases, the MSE from those coordinates less than the zeroing threshold also reduces.



\section{Open Problems}

Results in this paper have further improved the power of shuffle-DP with just $1+o(1)$ messages per user.  A major open problem is thus: are there any natural problems for which the optimal central-DP error is not achievable (subject to polylogarithmic factors) with $1+o(1)$ messages?  

There are also some interesting but more technical questions that are still open: (1) For frequency estimation, our $1+o(1)$ protocol still does not match the optimal central-DP error of $O(\log n)$, but with an $\omega(1)$ gap.  Closing this gap could be an interesting theoretical problem.  (2) For $k$-sparse vector summation under vector-level DP, the use of group privacy perhaps does not yield the best dependency on $k$.  We note that this problem is equivalent to (the sparse case of) mean estimation in high dimensions, and techniques from \cite{huang2021instance,biswas2020coinpress, zhou2022locally} might be useful for further improvement.

\begin{acks}
This work has been supported by HKRGC under grants 16201819, 16205420, and 16205422. We thank the anonymous reviewers of CCS '22 for valuable suggestions on improving the presentation of the paper.
\end{acks}

\bibliographystyle{ACM-Reference-Format}
\bibliography{ref}

\end{document}